%% file: energy-calculus.tex
\documentclass[10pt]{article}
\usepackage[margin=1in]{geometry}

\usepackage{amsmath,amssymb,amsthm}
\usepackage{mathtools}

\usepackage{stix2}

\usepackage{mlenergy}

\usepackage{graphicx}
\usepackage{wrapfig}
\usepackage{subfig}

\usepackage{array}
\usepackage{booktabs}
\usepackage{enumitem}
\usepackage[utf8]{inputenc}
\usepackage[T1]{fontenc}

\usepackage{cite, xurl}
\usepackage{hyperref}
\hypersetup{
  colorlinks=true,
  linkcolor=mlelink,
  citecolor=mlelink,
  filecolor=mlelink,
  urlcolor=mlelink
}
\usepackage[nameinlink]{cleveref}
\crefname{section}{Section}{Sections}
\Crefname{section}{Section}{Sections}
\crefname{appendix}{Appendix}{Appendices}
\Crefname{appendix}{Appendix}{Appendices}
\crefname{property}{Property}{Properties}
\Crefname{property}{Property}{Properties}

\newtheorem{definition}{Definition}[section]
\newtheorem{property}{Property}[section]
\newtheorem{proposition}{Proposition}[section]
\newtheorem{corollary}{Corollary}[section]
\newtheorem{axiom}{Axiom}[section]

\newcommand{\seqop}{\otimes}
\newcommand{\parop}{\oplus}
\newcommand{\parml}{\oplus_{\text{multi}}}

\newcommand{\Epred}{E_{\text{predicted}}}
\newcommand{\Emeas}{E_{\text{measured}}}
\newcommand{\Pcap}{P_{\text{cap}}}

\newcommand{\axname}[1]{\hyperref[ax:#1]{Axiom~#1}}

\title{\textbf{Energy Calculus}\\
       {\large A Compositional Algebra of Energy in Computational Systems}}
       
\author{\normalsize Mosharaf Chowdhury$^{*}$, Jae-Won Chung$^{*}$, Jeff J. Ma, Nishil Talati, Ruofan Wu}

\begin{document}

\begin{mlecover}
\input{sections/abstract}
\end{mlecover}
{\renewcommand{\thefootnote}{}\footnotetext{Authors listed alphabetically. \quad $^{*}$Equal contribution. \quad Correspondence: \texttt{calculus@ml.energy}}}
\clearpage

\setcounter{tocdepth}{2}
\tableofcontents
\clearpage

\input{sections/introduction}
\input{sections/background}
\input{sections/calculus}
\input{sections/reduction}
\input{sections/algebra}
\input{sections/frontier}
\input{sections/discussion}
\input{sections/related}
\input{sections/conclusion}

\clearpage
\appendix
\crefalias{section}{appendix}
\crefalias{subsection}{appendix}
\input{sections/appendices}

\clearpage
{
  \small
  \bibliographystyle{plain}
  \bibliography{refs}
}

\end{document}

%% file: sections/abstract.tex
\begin{abstract}
Energy is a binding constraint for AI scaling, yet it lacks the formal treatment that computation, communication, and learning have long enjoyed.
Recent systems demonstrate large energy savings, but each targets a specific granularity and structure; one cannot combine frequency scaling from one system with critical-path analysis from another and reason about their joint effect on total energy.
Energy remains a monolithic scalar that is measured after the fact and optimized with point solutions that do not generalize.

\medskip

We propose \emph{energy calculus}, a compositional algebra that treats energy as a first-class primitive.
It builds on \emph{energy elements}, units of computation whose energy we can reliably measure, each carrying an \emph{energy signature} that comprises its time, its static and dynamic energy, the hardware operating point and execution context under which we measured it, and the associated measurement uncertainty.
Three operators (sequential, same-device parallel, and cross-device parallel) compose signatures along the same structure as the computation itself, covering arbitrary DAG-structured executions.
The algebra rests on seven axioms that capture how hardware consumes energy, and it exhibits two properties distinctive to energy among computing resources: sequential composition commutes only when elements are mutually context-insensitive, and sequential composition does not distribute over parallel composition.
We also present a Reduction Theorem that recovers simple context-independent algebra whenever interactions fall below measurement uncertainty, so practitioners pay for context dependence only where the physics demands it.
Uncertainty propagates through every composition, so each prediction carries an error bound.
Finally, we show that the same operators extend from energy totals to time--energy Pareto frontiers, so reasoning about tradeoffs composes with the same algebra.
\end{abstract}

%% file: sections/introduction.tex
\section{Introduction}
\label{sec:introduction}

Energy is a binding constraint for AI scaling~\cite{bloombergnef25,cbre2025,inferencemax-blog25}.
While the demand for energy to power millions of accelerators keeps growing, energy procurement at scale remains slow: it takes around three years for natural gas sources and five to ten years for nuclear~\cite{eia-utility-data}.
As such, energy has become a computing resource that deserves the same level of attention and analysis as time, memory, and network bandwidth.

Recent work has advanced our understanding of energy as a quantifiable computing resource and derived large energy savings through optimization.
For instance, Zeus~\cite{zeus:nsdi23} leverages the fact that GPUs are not power-proportional and jointly tunes the batch size of training and the GPU's power limit to cut energy consumption by 24\%--75\%.
Perseus~\cite{perseus:sosp24} identifies ``energy bloat'' in large model training, where viewing training as a Directed Acyclic Graph (DAG) and analyzing the critical path allows up to 30\% reduction in training energy consumption with no slowdown.
Kareus~\cite{kareus:osdi26} decomposes energy into static and dynamic components and jointly optimizes GPU streaming multiprocessor (SM) allocation, computation/communication scheduling, and frequency, reducing training energy by up to 28\% at the same training time.

However, such systems target specific computation granularities and structures, and do not automatically compose or generalize.
One cannot trivially take frequency scaling from Zeus, critical-path analysis from Perseus, and launch timing from Kareus and reason about their combined effect on total energy.
Further, their optimization algorithms are tied to the specific structures they were designed for (e.g., recurring training jobs~\cite{zeus:nsdi23}, the DAG of training forward and backward computations~\cite{perseus:sosp24}, repeating partitions of computation and communication kernels~\cite{kareus:osdi26}).

This calls for a general formalism for reasoning about energy in computational systems.
Mature formalisms for reasoning about computation (complexity theory), communication (information theory), and learning and generalization (statistical learning theory) exist, but energy has yet to be treated as a composable quantity grounded in how hardware consumes it.
It is treated as a monolithic scalar, measured after the fact, and optimized by point solutions that do not generalize across granularity, computation structure, or hardware.
The formalism must decompose energy into algebraically composable components, propagate measurement-induced uncertainty\footnote{Software-reported energy measurements can deviate from hardware ground truth by $\pm$5\%~\cite{nvml}.} through composition, and apply at arbitrary granularity on any hardware.

We propose \emph{energy calculus} to fill this gap.
Energy calculus treats energy as a first-class primitive, a structured quantity decomposed into static and dynamic components that compose along the same structure as the computation itself.
The basic unit of analysis is an \emph{energy element}, defined as any unit of computation for which energy can be reliably measured.
Each element carries an \emph{energy signature}: its measured time and energy, the static--dynamic breakdown, the hardware operating point and execution context under which it was measured, and the associated measurement uncertainty.
By composing per-element energy signatures with explicit algebraic operators, energy calculus produces end-to-end energy predictions with bounded uncertainty, reusing previously characterized signatures rather than profiling each computation in full.

Energy calculus is built on the following principles.
First, the framework captures the physical nature of how computing consumes energy.
The static--dynamic decomposition reflects how hardware actually consumes energy, and we treat direct measurement as ground truth.
Second, measurement reliability sets the finest admissible granularity; above that floor~\cite{kareus:osdi26}, an energy element can be a kernel, layer, block, model, pipeline stage, or an entire job, and we operate at whichever granularity the analysis demands.\footnote{Note that finer decomposition does not reduce profiling time and can increase the number of elements to characterize; the saving comes instead from amortization.
An element characterized once on a target accelerator is composed into many computations without re-profiling, so a signature library amortizes its characterization cost across everything that reuses it.}
Third, we compose elements hierarchically so that elements combine into larger elements using the same operators.
A composition is itself an element with a well-defined energy signature.
Fourth, we make measurement uncertainty intrinsic to the calculus.
Every signature carries the variances of its measured quantities---execution time and total energy---and these variances propagate through every composition, so each prediction comes with an error bound.
Fifth, we admit context dependence: signatures are functions of the physical state in which elements run (thermal, cache, power delivery, etc.), and composition propagates that state from one element to the next.
\Cref{sec:calculus,sec:reduction} formalize these principles as seven axioms (five physical, two methodological).
Furthermore, a Reduction Theorem recovers simple context-independent algebra in the common case where interactions fall below measurement uncertainty, so practitioners pay the cost of context dependence only where it is physically required.

Energy calculus makes energy something practitioners can reason about.
A system designer can ask whether a frequency-scaling decision at the kernel level composes with a critical-path optimization at the pipeline level, and the calculus answers algebraically rather than by re-profiling the combination.
Once signatures are characterized on a target accelerator, predicting energy for a new computation composes existing signatures algebraically with propagated uncertainty, instead of requiring a full optimization sweep.
The same algebra extends from energy totals to time--energy Pareto frontiers, so reasoning about tradeoffs at any granularity uses the same operators as reasoning about totals.
Energy becomes a resource that can be budgeted across stages of a pipeline, allocated to subsystems, and traded against time at a chosen granularity.

\paragraph{Scope.}
This work establishes foundations rather than presenting an evaluation.
We define the algebra, ground it physically, state and prove its core properties, and identify the mechanisms that drive context dependence.
Empirical validation at scale, optimization layers built on the algebra, and energy-aware compilation are the subject of subsequent work; \Cref{sec:open-problems} catalogs these and other open directions.

\paragraph{Contributions.}
We make the following contributions:
\begin{itemize}[nosep]
    \item We introduce \emph{energy calculus}, a compositional framework that treats energy as a first-class primitive with static and dynamic components, parameterized by hardware operating point and execution context, and applicable at any granularity from kernel to job.
    We ground it in seven axioms (five physical, two methodological).

    \item We define three composition operators (sequential, same-device parallel, and cross-device parallel), show that they compose arbitrary DAG-structured schedules, and establish their algebraic properties.
    Two of these properties are distinctive to energy: conditional commutativity of sequential composition under context propagation, and non-distributivity of sequential over parallel composition.

    \item We prove a \emph{Reduction Theorem} that recovers context-independent algebra whenever interacting elements are mutually context-insensitive within measurement uncertainty.

    \item We identify the physical mechanisms behind context dependence (thermal, cache and memory, contention, power delivery) and present an interaction-graph workflow for characterizing non-negligible element pairs and prefixes.
    The inventory is open-ended and extensible without algebraic modification.

    \item We extend the algebra from energy totals to time--energy Pareto frontiers, showing that the three operators induce composition rules on operating frontiers.
    Parallel composition further admits slack-to-energy conversion: an element that finishes earlier than the others can be slowed down to reduce energy consumption, and the calculus provides an analytical framework for reasoning about this.
\end{itemize}

%% file: sections/background.tex
\section{Background}
\label{sec:background}

Our primary setting is AI datacenters, where GPUs dominate the energy of training and inference workloads.
Energy calculus only assumes that device power divides into static and dynamic components, and applies to any accelerator with that structure.

\subsection{Energy in Computational Systems}
\label{sec:energy-systems}

A GPU's power consumption has two components.
\emph{Dynamic power} tracks computation and memory activity: $P_{\text{dynamic}} \propto V^2 f$,\footnote{$V$ is the supply voltage and $f$ is the clock frequency.} dropping to zero when no work occurs.
\emph{Static power} flows continuously whenever the device is powered.
It includes leakage current, memory refresh, clock distribution, and other baseline costs.
For instance, an NVIDIA A100 at idle draws 40--60\,W; at full load, it can reach 300--400\,W depending on SKU and power cap~\cite{nvidia-a100-datasheet}.

The static--dynamic split determines how energy composes.
Two computations running in parallel on one device share static power, drawn once over the makespan.
When the two do not slow each other down by competing for shared resources, their dynamic energies add.
Two computations running sequentially on one device each draw static power for their own duration.
GPUs are not power-proportional: a GPU at 10\% utilization does not consume 10\% of peak power, because static power creates a high floor.
How sub-computations share or serialize on hardware therefore determines whether static power amortizes or goes to waste.

Hardware further shapes these components through runtime controls.
Dynamic voltage and frequency scaling (DVFS) varies $f$ and $V$ at runtime, while a power cap $\Pcap$ enforces an upper bound on instantaneous power draw.
Both static power and dynamic energy depend on $(f, V, \Pcap)$: the same computation at different settings consumes different energy in different amounts of time.
We collect these settings into an \emph{operating point} $\omega = (f, V, \Pcap, \ldots)$ that parameterizes the energy of every element.
These settings are device-wide; an element additionally carries per-element settings such as its SM allocation.
\Cref{sec:energy-elements} splits~$\omega$ into device-wide and per-element components accordingly.

\subsection{Computation as Structured Composition}
\label{sec:composition}

A computation arranges units of any granularity (kernel invocations, fused operator groups, layers, model blocks, pipeline stages, entire jobs) under execution and data dependencies.
That arrangement determines how the units' energies combine into total energy.

The parallelism strategies used in large model training illustrate the typical composition patterns:
\begin{itemize}[nosep]
    \item \emph{Data parallelism} replicates the model across devices.
    Each replica processes a shard of the minibatch (parallel across devices), followed by gradient synchronization (communication overlapping with or following computation).
    \item \emph{Tensor parallelism} partitions individual operators across devices within a layer.
    A single layer's forward pass interleaves local computation with collective communication, creating fine-grained same-device overlap, with cross-device synchronization at the communication boundary.
    \item \emph{Pipeline parallelism} partitions layers across devices into stages.
    A minibatch is split further into microbatches that flow through the pipeline, creating sequential processing within each stage, parallel execution of different microbatches across stages, and idle ``bubble'' time at the boundaries due to pipeline fill and drain, and workload imbalance.
\end{itemize}
In practice these strategies compose: a training job may use tensor parallelism within a node, pipeline parallelism across nodes, and data parallelism across node groups.
The resulting execution is a DAG of computational units connected by sequential, same-device parallel, and cross-device parallel relationships.
Newer regimes such as expert parallelism (for Mixture-of-Experts models) and context/sequence parallelism (for long-context training/inference) decompose into the same sequential and parallel relationships.

The same compositional structure appears beyond training: inference serving, data-processing frameworks, and scientific simulations all arrange sub-computations in time and across hardware through the same sequential and parallel relationships.

\subsection{Measurement Uncertainty}
\label{sec:measurement-uncertainty}

Energy measurement is inherently noisy.
Internal and external instruments measure different quantities at different timescales, and software-reported energy can deviate from hardware ground truth by $\pm$5\%~\cite{nvml}.
Device temperature drifts during measurement, creating run-to-run variance especially for short-duration kernels.
Sampling-interval limits inflate uncertainty for fine-grained measurements.
Different hardware platforms expose different baseline behaviors, complicating cross-hardware comparison further.

Three design requirements follow.
Every measured element must carry a variance term, uncertainty must propagate through every composition, and divergence between a composed prediction and a direct measurement (beyond propagated uncertainty) must flag a composition or characterization failure (\Cref{sec:uncertainty}).

\subsection{Composing the Energy of a Transformer Attention Layer}
\label{sec:attention-example}

We illustrate the core mechanism on a concrete workload before developing the formalism.
The workload is an attention layer from Llama~3.2~3B training on two A100 GPUs with tensor parallelism and computation--communication overlap.

The attention layer consists of a sequential prefix (RMS~Norm $\to$ Linear~1) followed by a parallel region where the compute path (RoPE $\to$ Attention $\to$ Linear~2) overlaps with AllReduce communication (\Cref{fig:parallel-execution}).
The compute kernels use 105 streaming multiprocessors (SMs) while AllReduce uses 3~SMs.

\begin{wrapfigure}{r}{0.5\textwidth}
    \centering
    \includegraphics[width=0.4\textwidth]{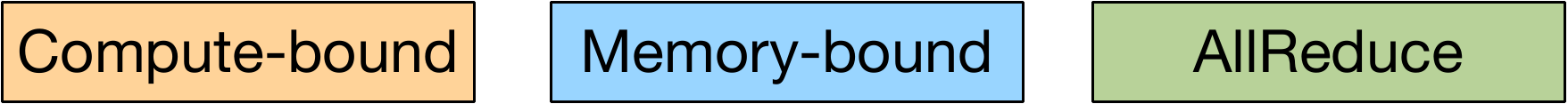}\\[6pt]
    \includegraphics[trim=100pt 0 0 0, clip, width=0.48\textwidth]{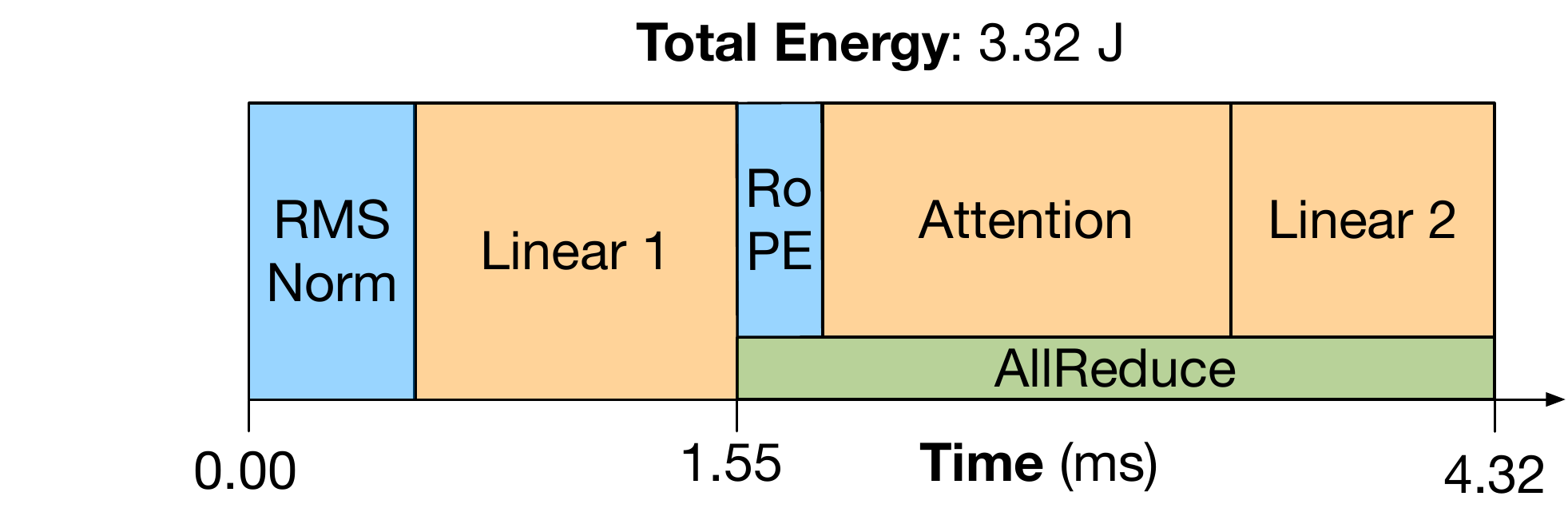}
    \caption{AllReduce overlaps with compute kernels during attention layer execution on 2$\times$A100 GPUs.}
    \label{fig:parallel-execution}
\end{wrapfigure}

\paragraph{Energy signatures.}
Each element has a measurable \emph{energy signature}: a static power~$s$ (W), execution time~$T$ (s), and dynamic energy~$d$ (J), so that total energy is $E = s \cdot T + d$.
For this layer, the sequential prefix takes $T_{\text{seq}} = 1.55$\,ms with dynamic energy $d_{\text{seq}}$;
the subsequent compute path takes $T_{\text{compute}} = 2.77$\,ms with dynamic energy $d_{\text{compute}}$;
and the AllReduce in parallel takes $T_{\text{ar}} = 2.77$\,ms with dynamic energy $d_{\text{ar}}$.
Each A100 draws 60\,W of static power at this operating point, so $s = 120$\,W summed across both GPUs; dynamic energies similarly aggregate across the pair.
Because the two GPUs execute identical per-GPU timelines in lockstep, the cross-device parallelism between them collapses into this combined static power, leaving the within-GPU compute--AllReduce overlap as the only operator structure visible in the formula below.

\paragraph{Composition.}
Given these signatures, the layer's total energy follows from how its parts are arranged in time.
In the parallel region, compute and AllReduce overlap on the same device, so static power is drawn \emph{once} over the region's makespan while dynamic energies add:
\begin{equation}
\label{eq:preview-parallel}
    E_{\text{parallel}} = s \cdot \max(T_{\text{compute}}, T_{\text{ar}}) + d_{\text{compute}} + d_{\text{ar}}
\end{equation}
The full layer composes the prefix and the parallel region sequentially, where each part draws static power for its own duration:
\begin{equation}
\label{eq:preview-layer}
    E_{\text{layer}} = s \cdot T_{\text{seq}} + d_{\text{seq}} + s \cdot \max(T_{\text{compute}}, T_{\text{ar}}) + d_{\text{compute}} + d_{\text{ar}}
\end{equation}
The total time is $T_{\text{layer}} = 1.55 + \max(2.77, 2.77) = 4.32$\,ms.

\paragraph{Static energy accounting.}
Compositional reasoning matters for the static term even when the execution is fixed.
Because the compute path and AllReduce run concurrently, the layer occupies the device for $T_{\text{layer}} = 4.32$\,ms and the correct static energy is $s \cdot T_{\text{layer}} = 120 \times 4.32\text{\,ms} = 518$\,mJ.
Treating the two as if they were serialized would sum their durations to $1.55 + 2.77 + 2.77 = 7.09$\,ms and give $120 \times 7.09\text{\,ms} = 851$\,mJ, overstating static energy by 333\,mJ (64\% of the correct value) by double-counting the overlap period.
Dynamic energy is unchanged in both cases because the same computations execute either way.
Direct measurement of the full layer yields 3.32\,J, of which the composed static term accounts for 518\,mJ; the rest is the three dynamic terms.

This example rests on two simplifying assumptions.
Each element consumes the same time and energy whether measured alone or as part of the layer, so a single measurement characterizes it once and for all.
The overlapping kernels do not slow each other down by competing for shared resources, so each keeps the duration and energy it has on its own.
Later sections show when these assumptions hold and how to handle the cases where they fail.

%% file: sections/calculus.tex
\section{Energy Calculus: Foundations}
\label{sec:calculus}

This section develops the foundational layer of energy calculus.
We define energy elements and the context variables that carry physical state and co-execution structure (\Cref{sec:energy-elements}), build energy signatures over them (\Cref{sec:energy-signature}), and give the composition operators in their general context-dependent form, showing that they compose arbitrary DAG-structured schedules (\Cref{sec:composition-operators}).
Finally, we ground the context variables physically, identifying the interaction mechanisms through which one element's execution changes the energy another consumes (\Cref{sec:mechanisms}).

Based on the foundations established in this section, \Cref{sec:reduction} establishes when context dependence can be dropped, and \Cref{sec:algebra} derives properties and rules of the composition algebra.

\subsection{Energy Elements and Context Variables}
\label{sec:energy-elements}

\begin{definition}[Energy Element]
\label{def:energy-element}
An \textbf{energy element} $e$ is a unit of computation that admits an energy measurement (or estimate) and composes with other elements into larger ones.
Each element is associated with \textbf{workload parameters}~$\theta$ describing the computation it performs, is realized by an \textbf{implementation}~$\kappa$ chosen among those computing its operation, and executes at an \textbf{operating point}~$\omega = (\omega_{\text{dev}}, \omega_{\text{elem}})$ of the hardware, split into a \textbf{device-wide part}~$\omega_{\text{dev}}$ shared by everything on the device and a \textbf{per-element part}~$\omega_{\text{elem}}$ partitioning device capacity among elements.\footnote{Which components are device-wide is hardware-dependent: on current GPUs, frequency, voltage, and the power cap are device-wide, $\omega_{\text{dev}} = (f, V, \Pcap)$, while software knobs that partition the GPU's resources (e.g., SM allocation) are per-element.}
Each element has a \textbf{trace}: a record of the actions it issues, e.g., the cache lines it touches or the memory transactions it requests.
\end{definition}

A GPU kernel, a layer, a Transformer block, a training step, or an entire job are all energy elements; the calculus prescribes no fixed granularity.
The contents of~$\theta$ likewise depend on granularity: tensor dimensions for a kernel, batch size and sequence length for a layer, the full configuration for a training run.
The implementation~$\kappa$ selects among the realizations of the element's mathematical operation.
For instance, the attention operation is one element, and cuDNN, FlexAttention, and FlashAttention are values of~$\kappa$, each with its own trace, time, and dynamic energy.

Beyond what an element mathematically computes ($\theta$), how it is realized ($\kappa$), and how the hardware is set ($\omega$), the energy consumption of an energy element depends on the physical environment of its execution, which we capture with three \emph{Context Variables}.

\begin{definition}[Context Variables]
\label{def:contexts}
The \textbf{entry context}~$\phi$ of an element is the vector of physical state variables it inherits from prior execution, e.g., device temperature, cache occupancy, memory fragmentation, power delivery state.
The \textbf{runtime context}~$\rho$ of an element is the set of elements, including itself, that run concurrently and potentially interfere with each other.
Because each element executes on a specific device, $\rho$ determines the hardware resources the elements share and hence the mechanisms through which they can interfere.
The \textbf{exit context} of an element is the physical state it leaves behind; whatever runs next inherits that as its entry context.
All context variables must be measurable or controllable, as state that can be neither observed nor set cannot be used to explain energy.
\end{definition}

Entry and exit contexts are state handed across time, whereas the runtime context is company kept across space.
Elements \emph{interact} when one's execution impacts another's energy, and the context variables are the complete interface for capturing that influence; a preceding element influences its successors through entry and exit contexts, and concurrently running elements influence each other through the runtime context they share.
Physical mechanisms (e.g., cache, memory, heat) define how interactions via each mechanism translate to actual context variable values.

What ties together the context variables is the element's trace.
The trace itself is fixed given the element, its workload parameters, and its implementation: context changes not what an element does but how long it takes and how much energy it consumes.
The trace records issued actions, not their hardware realization.
Whether a touched cache line hits or misses depends on context, and that difference lands in time and dynamic energy, not in the trace.
Thermal interaction, for example, takes the current the element drew---which, at the device's operating voltage, delivers energy that all ends up as heat released over its execution time---and computes the device temperature left behind.
We survey physical mechanisms and their \emph{context update rules} in \Cref{sec:mechanisms}.

Energy elements compose in their execution in many ways.
In the simplest case where two elements execute back to back on the same device and each runs in \emph{isolation} (its runtime context contains only itself), the first element's exit context is exactly the entry context the second inherits.
Elements can also overlap in time and share a runtime context $\rho$: on one device they take the same entry context and contend for its shared resources (e.g., memory bandwidth), while across devices they each take the entry context of their own device and contend for shared resources (e.g., a common interconnect, node-level power budget).
The traces of all elements in the runtime context contribute influence to the exit context of the whole overlapped execution.

\subsection{Energy Signatures}
\label{sec:energy-signature}

The \emph{energy signature} is the atom of the calculus: every energy element has an energy signature, and composition consumes its operands' energy signatures and produces the composed element's energy signature.
Throughout \Cref{sec:calculus,sec:reduction} we will introduce seven axioms that the calculus rests on, starting from the physical distinction between power that flows regardless of computation and energy that scales with the work performed.

\begin{axiom}[P1: Energy Decomposition]
\label{ax:P1}
Energy over an execution period is the time integral of power; therefore, energy over disjoint periods is additive.
For any energy element~$e$ with workload parameters~$\theta$ and implementation~$\kappa$ at operating point $\omega$ (device-wide part $\omega_{\text{dev}}$), in entry context~$\phi$ and runtime context~$\rho$, the energy consumed decomposes as:
\begin{equation}
\label{eq:energy-signature}
    E_e(\theta, \kappa, \omega, \phi, \rho) = s_e(\omega_{\text{dev}}, \phi, \rho) \cdot T_e(\theta, \kappa, \omega, \phi, \rho) + d_e(\theta, \kappa, \omega, \phi, \rho)
\end{equation}
where $s_e \geq 0$ is the static power of the device $e$ executes on, $T_e \geq 0$ is execution time, and $d_e \geq 0$ is dynamic energy.
\end{axiom}

In the special case of isolation ($\rho = \{e\}$), we suppress $\rho$ and write $E_e(\theta, \kappa, \omega, \phi) = s_e(\omega_{\text{dev}}, \phi)\,T_e(\theta, \kappa, \omega, \phi) + d_e(\theta, \kappa, \omega, \phi)$.
Values with a non-singleton~$\rho$ are whole-run values, absorbing the arrivals and departures of other peer elements running together.
In this case, both who the peers are and when they overlap matter to these values.
The values therefore depend on the \emph{co-scheduled configuration}: the runtime context $\rho$ together with the overlap windows, which the composition supplies since it sets every element's start and end times.
Two schedules that overlap the same elements differently have different characterized values.

Static power\footnote{We define static power to be not the deep idle/sleep power of the device, but rather the power draw of a \emph{ready} device at the same operating point and context as the element's execution. For instance, it would correspond to the NVIDIA GPU Performance state P0.} does not depend on~$\theta$ or~$\kappa$ as it is the baseline draw independent of the workload running on the device, and it depends on the operating point only through its device-wide part $\omega_{\text{dev}}$, since the per-element part $\omega_{\text{elem}}$ is tied to each element's execution decision which is orthogonal to the device's static power draw.
Static power is also a function of the runtime context $\rho$.
For instance, two elements that execute together may increase instant power draw or thermal load that newly or more severely trigger frequency throttling, which affects the device's static power.
Dynamic energy is the derived quantity ($d_e \coloneqq E_e - s_e T_e$), and the axiom asserts $d_e \geq 0$.

\begin{definition}[Energy Signature]
\label{def:energy-signature}
The energy signature of an energy element $e$ is
\begin{equation}
\label{eq:complete-signature}
    S_e = \bigl(e,\; \theta,\; \kappa,\; \omega,\; \Phi,\; \mathcal{P},\; s_e(\cdot),\; T_e(\cdot),\; d_e(\cdot),\; \sigma^2_{T_e},\; \sigma^2_{E_e},\; M\bigr)
\end{equation}
where $\Phi$ is the set of entry contexts and $\mathcal{P}$ the set of co-scheduled configurations (runtime contexts with their schedule windows) over which the element was characterized.\footnote{One can think of the individual entries in $\Phi$ and $\mathcal{P}$ as different environments where $e$ was characterized.}
Additionally, $s_e$, $T_e$, and $d_e$ are the components of \axname{P1} characterized over $\Phi$ and $\mathcal{P}$; $\sigma^2_{T_e}$ and $\sigma^2_{E_e}$ are the variances of the element's execution time and total energy; and $M$ is metadata (measurement method, hardware platform identity, conditions under which the signature was obtained).
\end{definition}

Composing energy elements (\Cref{sec:composition-operators}) is the core operation of energy calculus, and the energy signature is defined to contain every piece of information of the element needed to carry out the composition.
Hardware identity enters through the metadata~$M$: the achievable ranges of~$\omega$ and~$\phi$, and the signature functions themselves, are hardware-specific, so the same abstract computation has different signatures on different hardware, and predicting on other hardware means composing that hardware's own characterized signatures with the same operators.

\begin{axiom}[M1: Intrinsic Uncertainty]
\label{ax:M1}
Error in every measured signature component is summarized by its mean and variance.
For an element measured directly, static power is a calibration constant, exact at each characterized $(\omega_{\text{dev}}, \phi, \rho)$, and the errors of its execution time and dynamic energy are uncorrelated.
Uncertainty propagates through every composition.
\end{axiom}

Uncertainty is part of the signature because a reported energy value without an uncertainty estimate is incomplete.
The signature carries the variances of the two measured quantities, execution time and total energy, and the variance of the derived dynamic energy follows from them.
For an element measured directly, \axname{M1} gives that static power is exact and the errors of time and dynamic energy are uncorrelated, so \axname{P1} yields:
\begin{equation}
\label{eq:energy-variance}
    \sigma^2_{E_e} = s_e^2\,\sigma^2_{T_e} + \sigma^2_{d_e}.
\end{equation}
A composed element carries its energy variance directly, propagated from its parts.
The variances carry the random error: instrument noise and run-to-run jitter.
Heavier-tailed or multimodal error structure falls outside the model.
The three components are not measured symmetrically: execution time carries run-to-run variability, static power is a calibration constant and contributes no variance, and dynamic energy, derived per run as $E_e - s_e T_e$ (\axname{P1}), inherits the energy instrument's noise.\footnote{Time can often be measured with higher precision than energy in modern hardware, so $\sigma^2_{E_e}$ would be dominated by energy variability.}
If a run that happens to take $\Delta T$ longer contributes $s_e\,\Delta T$ more to its energy fluctuation, the subtraction $d_e = E_e - s_e T_e$ cancels that component. 
This leaves dynamic-energy error largely orthogonal to time, supporting \axname{M1}'s uncorrelated assumption.

\subsection{Composing Energy Elements}
\label{sec:composition-operators}

We define three composition operators that combine energy elements: sequential ($\seqop$), same-device parallel ($\parop$), and cross-device parallel ($\parml$).
The operators accept \emph{energy signatures} as input and produce the energy signature of the composite element.
For each operator, we derive the composite's energy consumption $E_{e_1 \,\mathrm{op}\, e_2}$, abbreviated $E_1 \,\mathrm{op}\, E_2$; \Cref{prop:closure} later proves the operators closed, deriving the composite's own signature components.
For notational convenience, $\omega$ and each element's $\theta$ and $\kappa$ are suppressed.

\paragraph{Idle elements and energy accounting.}
We write $\mathcal{D}_e$ for the set of devices hosting element $e$: a singleton for an element on one device, and larger for composites spanning several.
When multiple elements compose, all participating devices across the composing elements are held for the full makespan of the composition.
Any period in which a participating device runs nothing is viewed as running a special \emph{idle element}: an energy element with an empty trace, whose device stays ready at its element's device-wide operating point, drawing that device's static power at the idle period's own entry context and zero dynamic energy.
Thus, each participating device consumes static power for the full makespan of the composition (either via an element or an idle element), and the dynamic energy consumption of all elements adds on top of that static energy consumption.
We note that idle elements appear in runtime contexts like any other element, since an idle device still draws power and dissipates heat on shared rails and cooling paths, coupling it and concurrent elements in both directions.

\subsubsection{Sequential Composition ($\seqop$)}
\label{sec:sequential-composition}

For two elements executing one after the other on the same device, the composite's energy is the sum of the two executions' energies (\axname{P1}), each evaluated at its own entry and runtime contexts:
\begin{equation}
\label{eq:sequential}
    E_1 \seqop E_2 = E_{e_1 \seqop e_2}(\phi_0) = s_1(\phi_0, \rho_1)\,T_1(\phi_0, \rho_1) + d_1(\phi_0, \rho_1) + s_2(\phi_1, \rho_2)\,T_2(\phi_1, \rho_2) + d_2(\phi_1, \rho_2),
\end{equation}
where $\phi_0$ is the entry context of the first element and the composite, $\phi_1$ is $e_1$'s exit context, inherited by $e_2$ as its entry context, and $\rho_i$ is the runtime context the surrounding composition places $e_i$ in---the two elements run back to back, so neither appears in the other's.
For $n$ operands, the composite's energy is still the sum of the $n$ elements' total energy consumption, each evaluated at its own entry context, which is the exit context of its predecessor, and its own runtime context.

\subsubsection{Same-Device Parallel Composition ($\parop$)}
\label{sec:same-device-parallel}

Working out the energy consumption of elements that start together and overlap inside one device starts from a physical fact: however many elements run, the device draws its static power once.

\begin{axiom}[P2: Per-Device Static Power]
\label{ax:P2}
A device has a single static power, reflecting components that draw power regardless of the workload (e.g., leakage, clock distribution, memory refresh), so however many elements run concurrently in the device, the device draws static power once.
\end{axiom}

Following \axname{P2}, the composite's total energy consumption is:
\begin{equation}
\label{eq:same-device-parallel}
    E_1 \parop E_2 = E_{e_1 \parop e_2}(\phi_0) = s(\phi_0, \rho) \cdot \max\bigl(T_1(\phi_0, \rho), T_2(\phi_0, \rho)\bigr) + d_1(\phi_0, \rho) + d_2(\phi_0, \rho)
\end{equation}
where $\phi_0$ is the shared entry context of both elements, $\rho$ is the runtime context the surrounding composition places the pair in ($\{e_1, e_2\}$ when the pair runs alone), and $s$ is the device's static power ($s = s_1 = s_2$ as they run on the same device).
Following the same energy accounting principle, the device draws its static power over the whole makespan of the composition ($\max(T_1, T_2)$), and that static power is drawn once per device (\axname{P2}).
Compared with each element executing in isolation, concurrently running elements can \emph{interact} with each other (\Cref{sec:mechanisms}), slowing each other down and perturbing each other's dynamic energy---this is what the runtime context $\rho$ captures.
For $n$ elements running on the same device together, the makespan is $\max_i T_i$ over which the device's static power is consumed, and dynamic energies add.

The same-device parallel operator $\parop$ is for all elements that start executing at the same time.
Cases where elements start at different times are captured by sequential composition with idle elements.
That is, such a schedule's energy consumption can be computed by explicitly introducing idle elements in front of the later-starting elements, sequentially composing them with $\seqop$, and then composing the resulting elements with $\parop$.

\subsubsection{Cross-Device Parallel Composition ($\parml$)}
\label{sec:cross-device-parallel}

For two elements executing concurrently on \emph{different} devices for the same duration, each device draws its own static power:
\begin{equation}
\label{eq:cross-device-parallel}
    E_1 \parml E_2 = E_{e_1 \parml e_2}(\phi_0) = s_1(\phi_0, \rho)\,T_1(\phi_0, \rho) + d_1(\phi_0, \rho) + s_2(\phi_0, \rho)\,T_2(\phi_0, \rho) + d_2(\phi_0, \rho),
\end{equation}
where $\phi_0$ is the composite entry context each device starts from, $\rho$ is the runtime context the surrounding composition places the pair in ($\{e_1, e_2\}$ when the pair runs alone), and $T_1 = T_2$ by the operator's equal-duration requirement.
Runtime context $\rho$ matters across devices too: concurrent elements running on different devices can still interfere (e.g., PCIe switch, NVLink/NVSwitch/NVL72 domain, a network link, node- or rack-level power budgets) if they share and saturate a resource.
For $n$ elements of equal duration, each device's static power is drawn over the common duration, and dynamic energies add.

The cross-device parallel operator $\parml$ is for all elements that start executing at the same time and run for the same duration.
Cases where the start and end times differ across elements are captured by sequential composition with idle elements, similarly to the same-device parallel case.
Specifically, such a schedule's energy consumption can be computed by explicitly introducing idle elements in front of the later-starting elements and after the earlier-ending elements, sequentially composing them with $\seqop$, and then composing the resulting elements with $\parml$.

\begin{figure*}[t]
  \centering
  \subfloat[The execution plan as a DAG.]{
    \includegraphics[scale=0.45]{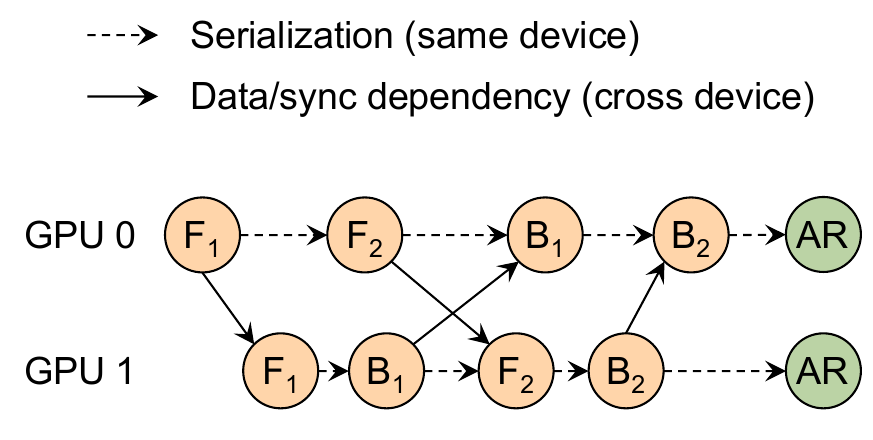}
    \label{fig:composition-case-study-dag}
  }
  \hfil
  \subfloat[The plan reduced to per-GPU sequences with idle elements.]{
    \includegraphics[scale=0.45]{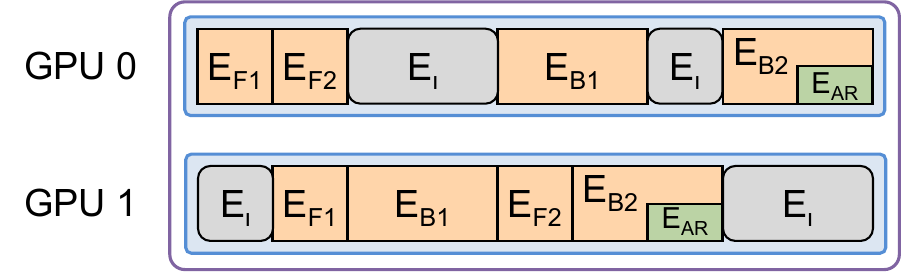}
    \label{fig:composition-case-study-reduced}
  }
  \caption{
    Composing energy of a DAG-structured schedule with the three operators.
    (a) A two-stage 1F1B pipeline-parallel execution of two microbatches, running as one replica of a data-parallel job; $\mathit{AR}$ is the gradient AllReduce with the peer replica (not shown).
    (b) Idle elements cover every gap and pad the later-starting $\mathit{AR}$, which $\parop$ merges with $B_2$.
    Each blue box is the element $\seqop$ composes within its GPU, and the purple box around all GPUs is the final $\parml$ element.
  }
  \label{fig:composition-case-study}
\end{figure*}

\subsubsection{Composition Over a DAG of Elements}
\label{sec:dag-composition}

So far, we have introduced each operator over exactly two elements, with one device in the world for $\seqop$ and $\parop$ and two for $\parml$.
The key challenge is to determine how entry contexts, runtime contexts, and schedules are derived for each element under arbitrary compositions.
Real execution schedules (e.g., pipeline parallelism, data parallelism, complex overlap patterns) form DAGs of elements, and this section composes any such schedule with the three operators.

We take a DAG to be a fully scheduled execution plan, not a bare dependency specification: each element carries its placement $\mathcal{D}_e$, and the edges are the data and synchronization dependencies plus the serialization edges between elements executing on the same device.
The schedule determines the contexts given to every element.
Only data and synchronization dependencies cross the device boundary, not hardware state, so an element's entry context is its own device's state when it starts.

Any such plan can be expressed as a term over the three operators, with \Cref{fig:composition-case-study} providing an example.
The general procedure is as follows.
First, merge the elements that overlap on one device into a single element via $\parop$, prepending idle elements to later-starting ones if needed.
Second, insert an idle element into every period in which a device runs nothing, including the gaps before a device's first execution and after its last, so that every device is covered for the full makespan.
Finally, compose each device's elements with $\seqop$ in execution order; the resulting per-device elements all span the same makespan, the length of the critical path through the plan, so $\parml$'s equal-duration requirement is met by construction and composes them across devices.
In \Cref{fig:composition-case-study}, the gradient AllReduce starts partway through $B_2$ on each GPU, so the first step pads it with an idle element and $\parop$ merges the pair; the full plan becomes:
\begin{equation}
E =
\Bigl(
E_{F_1} \seqop E_{F_2} \seqop E_{\iota} \seqop E_{B_1} \seqop E_{\iota} \seqop \bigl(E_{B_2} \parop (E_{\iota} \seqop E_{\mathit{AR}})\bigr)
\Bigr)
\parml
\Bigl(
E_{\iota} \seqop E_{F_1} \seqop E_{B_1} \seqop E_{F_2} \seqop \bigl(E_{B_2} \parop (E_{\iota} \seqop E_{\mathit{AR}})\bigr) \seqop E_{\iota}
\Bigr).
\end{equation}
This process is exact: every instant on every participating device belongs to exactly one execution---an element, a same-device parallel group, or an idle element---and energy over disjoint periods adds.
The term is thus the execution schedule in operator form, designating the entry and runtime contexts for every element for the purpose of evaluating the energy signature of composites at each level.

\subsection{Interaction Mechanisms}
\label{sec:mechanisms}

Energy elements can interact with each other and influence each other's energy through physical mechanisms.

\begin{axiom}[P3: Interaction Structure]
\label{ax:P3}
Interaction effects decompose into a finite set of physical mechanisms, each summarizable by a small number of effective state variables.
\end{axiom}

\axname{P3} is an assumption about the structure of interactions: the channels through which one element's execution can influence another's energy are finite and well-defined, which allows us to characterize them and compose them in the calculus, exactly within this model.

Thermal coupling, cache and memory state, and power delivery act through the entry context: they are state accumulated from prior execution, and their effective state variables constitute~$\phi$.
Resource contention acts through the runtime context: it exists only during overlapped execution, is a property of concurrent execution rather than of any entry state, and is characterized at the pair or composite level.
The contended resource can be device-local for $\parop$ (memory bandwidth, cache capacity) or shared across devices for $\parml$ (PCIe switch, NVLink/NVSwitch/NVL72 domain, TCP/RDMA link, node- or rack-level power budgets).

A mechanism acting through the entry context additionally carries a context update rule that advances its effective state variables under the traces present.
Temperature responses add approximately under a linear thermal model; cache occupancy follows a displacement model, traces evicting entry-state residency and one another under the capacity constraint; power-delivery state decays to the trace's ending draw within microseconds.
Exit contexts assemble by these rules: a successor's entry context follows from its predecessor's entry context and trace.
A few effective state variables suffice per mechanism, even when the underlying physical state (e.g., cache contents) is high-dimensional.

Characterization is performed via measurement, and measurement needs a starting state that every run can reproduce: the \emph{null context}~$\phi_{\text{null}}$, the reference entry context the characterization protocol realizes (\axname{M2}), in which no mechanism carries appreciable state from prior execution.

\begin{axiom}[M2: Interaction Measurability]
\label{ax:M2}
Signatures are characterized by direct measurement of the element under multiple representative entry contexts spanning the operating range, with other variables fixed.
Characterization starts from the null context~$\phi_{\text{null}}$, measures the signature components at each context and the element's trace once, and requires a protocol that isolates~$\phi$ from other sources of variation (thermal stability, consistent initial cache state, controlled concurrency).
The runtime context is covered the same way, by co-executing the element with representative runtime contexts.
\end{axiom}

\Cref{tab:interaction-categories} identifies the current inventory and how each mechanism is characterized, and is open to extension.
To add a mechanism, one must specify its physical cause, its characteristic timescale, a measurement protocol that isolates it from other mechanisms, a set of effective state variables, and rules for taking the element's trace and contexts and producing the exit context.
Adding a new mechanism does not change the structure of the calculus: it adds a new entry context variable or a new shared resource acting through the runtime context, and the composition operators remain unchanged.
\Cref{app:interactions} details current protocols.

\begin{table}[t]
\centering
\caption{Current interaction mechanism inventory, grouped by the context channel through which each mechanism acts.
Full characterization protocols appear in \Cref{app:interactions}.}
\label{tab:interaction-categories}
\begin{tabular}{@{}l >{\raggedright\arraybackslash}p{1.95in} >{\raggedright\arraybackslash}p{1in} >{\raggedright\arraybackslash}p{1.65in}@{}}
\toprule
\textbf{Mechanism} & \textbf{Physical cause} & \textbf{Timescale} & \textbf{Characterization} \\
\midrule
\multicolumn{4}{@{}l}{\emph{Mechanisms acting through the entry context~$\phi$}} \\
\addlinespace
Thermal coupling & Heating increases leakage and can trigger throttling & Seconds & Measure $E$ at multiple device temperatures \\
\addlinespace
Cache/memory state & Cache warming or pollution by preceding elements & Intervening traffic & Measure $E$ with cold vs.\ warm cache \\
\addlinespace
Power delivery & Voltage droops from current-draw transitions & Nanoseconds to microseconds & High-resolution power measurement at boundaries \\
\midrule
\multicolumn{4}{@{}l}{\emph{Mechanisms acting through the runtime context~$\rho$}} \\
\addlinespace
Resource contention & Shared resources (bandwidth, cache, fabrics, power budgets) degraded under concurrency & Instantaneous & Measure $E$ in isolation vs.\ concurrently with the interferer \\
\bottomrule
\end{tabular}
\end{table}

%% file: sections/reduction.tex
\section{Reduction to Context-Independence}
\label{sec:reduction}

The generic framework of \Cref{sec:calculus} is complete but can become unwieldy: every signature value is a function of the entry and runtime contexts, and characterizing that dependence mechanism by mechanism becomes less tractable as the number of elements grows.
However, in practice, the full generality may not always be necessary.
Elements that do not share an interaction mechanism's hardware channel do not interact through it at all.
Furthermore, interaction would be limited between elements that are far apart, where distance is mechanism-specific: elapsed time for thermal state and power-delivery transients, and intervening memory traffic for cache state.
In this section, we formalize these observations (\Cref{sec:locality}), present the Reduction Theorem which unlocks a simplified \emph{context-independent} algebra (\Cref{sec:reduction-theorem}), and present the interaction graph, which records characterized interactions between context-dependent elements and lets everything else compose context-independently (\Cref{sec:interaction-graph}).
Finally, we summarize the foundational axioms and notation of context-dependent and context-independent energy calculus (\Cref{sec:axiom-summary}).

\subsection{Context-Insensitivity}
\label{sec:locality}

\subsubsection{Sensitivity to Entry Context}
Certain pairs of elements may interact negligibly.
For entry context dependence, \axname{P4} provides an analytical model for interactions between elements and the decay thereof.

\begin{axiom}[P4: Interaction Locality]
\label{ax:P4}
Entry context effects decay with separation between elements.
Separation is mechanism-specific.
For energy elements $e_1$ and $e_2$, each mechanism~$k$ carries its own separation measure~$\ell_k(e_1, e_2)$: a mechanism couples the two only when they share the hardware it acts through ($\ell_k(e_1, e_2) = \infty$ otherwise), and, given shared hardware, $\ell_k(e_1, e_2)$ measures their separation in the mechanism's own metric.
Then, for each signature component $x \in \{s, T, d\}$:
\begin{equation}
\label{eq:locality}
\bigl|\,x_2(\phi_1) - x_2(\phi_{\lnot 1})\,\bigr| \;\to\; 0 \quad \text{as } \ell_k(e_1, e_2) \to \infty \text{ for all } k
\end{equation}
where $\phi_1$ is the entry context $e_2$ sees following $e_1$'s execution from $\phi_0$, and $\phi_{\lnot 1}$ the entry context $e_2$ would see had $e_1$ not executed, all else unchanged.
\end{axiom}

The two contexts differ only in the state $e_1$ left behind, which decays mechanism by mechanism as the separation grows, and the \emph{interaction effect} of $e_1$ on each of $e_2$'s signature components---hence on its energy consumption---vanishes in the limit.

How $\ell_k$ is defined depends on the interaction mechanism $k$.
For the current inventory (\Cref{tab:interaction-categories}), thermal state couples elements that share a die and cooling path, with $\ell_k$ the elapsed time between them; cache state couples elements that share a cache hierarchy, with $\ell_k$ the memory transactions between the two executions; and power-delivery transients couple elements on the same delivery rail, with $\ell_k$ the elapsed time since the current transition.
The rate of decay as $\ell_k \to \infty$ is mechanism-specific as well, e.g., thermal effects would be slow, whereas with sufficient memory transactions, cache state interactions would decay quickly.

Now, we define when we can treat an element to be insensitive to different entry contexts it executes under, which relieves the costly characterization of the element under every possible context.

\begin{definition}[History-Insensitivity]
\label{def:history-insensitive}
An element~$e$ is \emph{history-insensitive} with respect to a set of entry contexts (i.e., history) within measurement uncertainty if, for all $\phi, \phi'$ in the set,
\begin{equation}
\label{eq:history-insensitive}
    \Delta s\, T_e(\phi') + \Delta T\, s_e(\phi') + \Delta s\, \Delta T + \Delta d \;\leq\; c_\alpha \cdot \sigma_{E_e},
\end{equation}
where $\Delta s = |s_e(\phi) - s_e(\phi')|$, $\Delta T = |T_e(\phi) - T_e(\phi')|$, and $\Delta d = |d_e(\phi) - d_e(\phi')|$, for the desired confidence level~$(1 - \alpha)$, with $c_\alpha$ the two-sided Gaussian multiplier, the $(1 - \alpha/2)$ quantile of the standard normal distribution.
\end{definition}

The left-hand side comes from expanding the absolute total energy difference $|E_e(\phi) - E_e(\phi')|$ into static power, time, and dynamic energy components per \axname{P1} and extracting a worst-case bound containing each component's absolute difference (derivation in \Cref{app:expansion}).
Each difference is measurable: time per run, static power as the ready device's draw at each context, and dynamic energy derived per run as $E_e - s_e T_e$ (\axname{P1}).
When this worst-case bound is satisfied, each component's contribution is bounded separately, and total energy difference is bounded by $c_\alpha \cdot \sigma_{E_e}$ as well.\footnote{Directly bounding the total energy difference is not what we want. For instance, an element may have the same total energy under entry contexts $\phi$ and $\phi'$, but $\phi'$ may lead to longer execution time and lower dynamic energy than $\phi$ by the compensating amount. In this case, a total-energy-only bound test would indicate history-insensitivity, but wherever time stands alone in composition (e.g., in the makespan and the static power it multiplies), the composite's energy under $\phi$ and $\phi'$ would differ significantly. This contradicts history-insensitivity.}

\begin{definition}[Mutual History-Insensitivity]
\label{def:mutual-history-insensitive}
Two elements are \emph{mutually history-insensitive} when each is history-insensitive with respect to the contexts the other can produce.
\end{definition}

History-insensitivity is a property of an element paired with a set of entry contexts, not an intrinsic property of the element.
An element may be history-insensitive with respect to thermal variation within a narrow band and history-sensitive with respect to cache state.
It is relative to measurement uncertainty as well: the threshold $c_\alpha \cdot \sigma_{E_e}$ treats shifts smaller than a chosen multiple of the element's run-to-run variation, which includes mainly measurement instrument uncertainty (\axname{M1}), as negligible.
As instruments improve, the instrument uncertainty component of $\sigma_{E_e}$ shrinks.
Thus, with higher fidelity measurement instruments, the insensitivity check is able to detect interactions for characterization that coarser measurement could not separate from noise.

\subsubsection{Sensitivity to Runtime Context}
Locality and history-insensitivity covers the entry context: it bounds how an element responds to the state its predecessors leave behind.
Concurrent composition needs a second condition, for the runtime context.
For runtime context dependence, \axname{P5} states the physical regularity: unlike entry-context effects, which decay with separation, interference is thresholded.

\begin{axiom}[P5: Interference Thresholding]
\label{ax:P5}
Runtime context effects are thresholded by shared-resource capacity.
Concurrent elements interfere only through the hardware resources they share, each with finite capacity.
While the elements' combined demand on every shared resource stays within its capacity, the interference effect on each element's static power, execution time, and dynamic energy remains negligible; once any shared resource saturates, the effect grows abruptly with the excess demand.
\end{axiom}

\axname{P5} allows us to formalize when elements that run together are insensitive to each other's presence.

\begin{definition}[Interference-Insensitivity]
\label{def:interference-insensitive}
An element~$e$ is \emph{interference-insensitive} with respect to a set of runtime contexts containing it and a set of entry contexts, within measurement uncertainty, if for every $\rho$ and every $\phi$ in the two sets,
\begin{equation}
\label{eq:interference-insensitive}
    \Delta s\, T_e + \Delta T\, s_e + \Delta s\, \Delta T + \Delta d \;\leq\; c_\alpha \cdot \sigma_{E_e},
\end{equation}
where $\Delta x = |x_e(\phi, \rho) - x_e(\phi, \{e\})|$ for $x \in \{s, T, d\}$, and the multipliers $T_e$, $s_e$ are the isolation values $T_e(\phi, \{e\})$, $s_e(\phi, \{e\})$.
\end{definition}

The left-hand side mirrors \Cref{eq:history-insensitive}: it is the same \axname{P1} expansion applied to the co-run and isolation runs, $|E_e(\phi, \rho) - E_e(\phi, \{e\})|$, so satisfying it bounds each component's co-run perturbation separately and the total energy difference as well (\Cref{app:expansion}).

\begin{definition}[Mutual Interference-Insensitivity]
\label{def:mutual-interference-insensitive}
Two elements are \emph{mutually interference-insensitive} when both are interference-insensitive with respect to their shared runtime context $\{e_1, e_2\}$ and the entry contexts they execute under.
\end{definition}

Interference-insensitivity is the runtime counterpart of history-insensitivity.
When elements that run together collectively exceed a shared resource's capacity (\axname{P5}), they ought to be characterized as a composite element, which the interference-insensitivity test detects.

\subsubsection{Putting the Two Together}
A composition realizes both kinds of context, so we put them together in a single definition.

\begin{definition}[Mutual Context-Insensitivity]
\label{def:mutual-context-insensitive}
Two composed elements are \emph{mutually context-insensitive} when each is history-insensitive with respect to the null context and the entry contexts the composition realizes for it, and interference-insensitive with respect to the runtime contexts the composition places it in, at those same entry contexts.
\end{definition}

The definition covers idle elements like any other element: an idle element's duration is set by the schedule and its dynamic energy is zero, so both conditions reduce to their $\Delta s$ terms, with its energy variance inherited from the durations that set its gap (\Cref{sec:composition-operators}).
When a composition is purely sequential and runs alone, every runtime context is a singleton, the differences in \Cref{eq:interference-insensitive} vanish, and the condition reduces to mutual history-insensitivity; the interference half constrains compositions that co-run elements.

\subsection{The Reduction Theorem}
\label{sec:reduction-theorem}

Pairs of elements can be mutually context-insensitive, as formalized by \Cref{sec:locality}.
This allows us to reduce the generic, context-dependent algebra of \Cref{sec:calculus} to a simpler, context-independent form.

\begin{proposition}[Reduction Theorem]
\label{prop:reduction}
Let $e_1$ and $e_2$ be composed by an operator $\mathrm{op} \in \{\seqop, \parop, \parml\}$, with the pair mutually context-insensitive at confidence level~$(1 - \alpha)$.
Write $\mathrm{op}^{(0)}$ for the same composition computed with every signature value evaluated at the null context in isolation.
Then, whatever entry context the composition itself starts at, $E_1 \,\mathrm{op}\, E_2$ and $E_1 \,\mathrm{op}^{(0)}\, E_2$ agree within operator-specific multiples of the elements' \emph{bias tolerances} $c_\alpha \cdot \sigma_{E_{e_i}}$:
\begin{equation}
    \label{eq:reduction}
    \small
    \bigl| E_1 \,\mathrm{op}\, E_2 - E_1 \,\mathrm{op}^{(0)}\, E_2 \bigr| \;\leq\; m_1 \, c_\alpha\, \sigma_{E_{e_1}} + m_2 \, c_\alpha\, \sigma_{E_{e_2}},
    \qquad
    \begin{array}{l@{\;\;}r@{\;\;}r}
        \mathrm{op} & m_1 & m_2 \\
        \midrule
        \seqop & 1 & 1 \\
        \parop & 2 & 2 \\
        \parml & 2 & 2
    \end{array}
\end{equation}
The $\parml$ row covers its equal-duration domain; unequal durations compose through idle padding, whose added terms \Cref{app:reduction-accumulation} derives.
\end{proposition}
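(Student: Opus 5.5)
The plan is to handle each operator separately. In each case, the actual composite energy (evaluated at the realized entry and runtime contexts) and the null-context, isolation composite energy are both expanded via \axname{P1} and the operator definitions. The gap then splits into per-element differences, and each difference is bounded by the \emph{component-wise} worst-case quantities appearing in \Cref{def:history-insensitive} and \Cref{def:interference-insensitive}. The key tool is the expansion already used to justify those definitions (\Cref{app:expansion}). Write $x = x' + \delta x$ for $x \in \{s, T, d\}$. Then
\[
|sT + d - s'T' - d'| \leq \Delta s\,T' + \Delta T\,s' + \Delta s\,\Delta T + \Delta d .
\]
So whenever a definition bounds the right-hand side by $c_\alpha\sigma_{E_e}$, the full energy difference of that element is bounded too.

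\textbf{Sequential composition.} By \Cref{eq:sequential}, the composite energy is $E_{e_1}(\phi_0,\rho_1) + E_{e_2}(\phi_1,\rho_2)$. With $\rho_i$ singleton when the pair runs alone, the difference to the reduced form is
\[
|E_{e_1}(\phi_0) - E_{e_1}(\phi_{\text{null}})| + |E_{e_2}(\phi_1) - E_{e_2}(\phi_{\text{null}})|
\]
by the triangle inequality. Here $\phi_0$ is whatever entry context the composition starts at, and $\phi_1$ is the realized exit context of $e_1$. \Cref{def:mutual-context-insensitive} requires each element to be history-insensitive with respect to the null context together with the entry contexts the composition realizes. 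So each term is at most $c_\alpha\sigma_{E_{e_i}}$, giving $m_1 = m_2 = 1$. If a non-singleton runtime context is present, I would insert an intermediate term $E_{e_i}(\phi,\{e_i\})$. That yields a history step plus an interference step, but for $\seqop$ the definitions say the runtime contexts are singletons, so I expect this not to arise.

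\textbf{Parallel compositions.} For $\parop$ I use \Cref{eq:same-device-parallel}. The static term $s\max(T_1,T_2)$ does not split as a sum, so I handle it through a two-hop comparison. The first hop compares the co-run values at $\phi_0$ with the isolation values at $\phi_0$, which interference-insensitivity controls. The second hop compares the isolation values at $\phi_0$ with those at $\phi_{\text{null}}$, which history-insensitivity controls. Each hop contributes at most $c_\alpha\sigma_{E_{e_i}}$ per element, which is where the factor $2$ comes from. Within a hop, I bound
\[
|s\max(T_1,T_2) - s'\max(T_1',T_2')|
\]
using $|\max(a,b) - \max(a',b')| \leq \max(|a-a'|,|b-b'|)$. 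The $\Delta T$ contribution is then at most $s'(\Delta T_1 + \Delta T_2)$, and the $\Delta s$ contribution is at most $\Delta s\,\max(T_1,T_2)$ plus the cross term. The dynamic terms contribute $\Delta d_1 + \Delta d_2$.

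For $\parml$ the composite is a sum of per-device energies, so the same two-hop triangle inequality applies term by term, again giving factor $2$. The equal-duration requirement is respected because both evaluations use the same schedule.

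\textbf{Main obstacle.} The hardest step is the static term of $\parop$. There $\Delta s$ multiplies $\max(T_1,T_2)$, not each element's own $T_i$, and the single shared $s$ must be charged to both elements. I would first try attributing $\Delta s\,\max(T_1,T_2)$ to the element attaining the max. The $s$-difference is the same device-level quantity seen by both elements, so it is bounded by that element's own $\Delta s\,T_i$ term. The other element is charged only its $\Delta T$ and $\Delta d$ terms. If that attribution fails because the maximizer switches between the two evaluations, I would split into cases by which element is longer in each evaluation, and absorb the discrepancy into the slack of the factor $2$.
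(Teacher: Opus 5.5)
Your proposal follows essentially the same route as the paper. Both use the per-element P1 expansion of \Cref{app:expansion} and charge one history tolerance per element under $\seqop$. Under $\parop$ and $\parml$, both use a two-substitution chain (interference at the realized entry context, then history in isolation), charging the shared $\Delta s$ to the element attaining the reference makespan and the makespan shift via $|\Delta\max|\le\max_i\Delta T_i$ (your sum $\Delta T_1+\Delta T_2$ is looser but equally valid), which gives $m_i=2$.

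Your worry about the maximizer switching is unfounded. Expanding around the reference evaluation, as your explicit cross term already does, puts only the reference maximizer in the $\Delta s$ term. This matters because your fallback would not work: the factor $2$ is fully spent on the two substitutions and has no slack to absorb anything extra.
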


\begin{proof}[Proof sketch]
Reduction replaces each element's realized values with null-context isolation values through two kinds of substitution, each shifting the element's static power, time, and dynamic energy by at most one bias tolerance.
Entry-state substitution evaluates the element at the null context instead of its realized entry context, bounded by history-insensitivity; isolation substitution evaluates it in isolation instead of at the shared runtime context, bounded by interference-insensitivity.
The multipliers count the substitutions each element needs: sequential composition runs alone, so each element carries only its entry-state substitution, and the parallel operators co-run their elements, so each carries both.
\Cref{app:reduction} proves each case.
\end{proof}

When the composition itself starts at the null context, elements that start with it skip the entry-state substitution and drop one tolerance each, e.g., $m_1 = 0$ under $\seqop$.
The Reduction Theorem allows us to compose signatures at the null context in isolation, without characterizing every possible entry and runtime context, significantly reducing the characterization burden.
\Cref{sec:algebra} derives properties and rules of composition in both the general algebra and the resulting \emph{context-independent algebra}.

Reduction is widely applicable in practice.
Elapsed time or elements in between separate most pairs enough that their mutual entry context effects fall below measurement uncertainty.
Concurrently running elements are often scheduled carefully to avoid bottlenecking on shared resources (lest they risk little gain or even slowdown), so their interference is often negligible as well (\axname{P5}).
Finally, many workloads run in steady-state thermal and cache regimes where signatures characterized once apply across many compositions; the set of entry and runtime contexts that introduces perturbation larger than the bias tolerance is not large, and the theorem would be applicable to many pairs.
Where these conditions fail (cold-start transients, adjacent thermally-coupled kernels, shared-cache contention), the reduction does not apply, and more exact characterization is needed (\Cref{sec:interaction-graph}).

Composing elements at the null context in isolation (i.e., using the context-independent $\mathrm{op}^{(0)}$ operators) accumulates the bias term: each element's contribution to the composite's energy may be off by up to $m_i$ times its bias tolerance (\Cref{eq:reduction}).
Moreover, for a given composition, the offset is the same on every run, unlike measurement noise which varies run to run.\footnote{The bias has a physical cause that the composition reproduces on every run. For instance, if element $e_2$ always runs right after a high-power element $e_1$, $e_2$ enters at the same elevated temperature on every run and consumes the same extra energy relative to its signature values at the null context. Averaging repeated runs therefore shrinks measurement noise but leaves the bias untouched.}
Thus, over an $n$-element composition, each element contributes its own bias, growing the worst case total bias linearly with $n$.\footnote{Reaching that worst case requires every element's bias to point in the same direction, i.e., sustained context drift like monotonic heating.}
The worst case bound is computable in advance by summing every element's $m_i \, c_\alpha \, \sigma_{E_{e_i}}$ bounds, and this can be checked against what the application tolerates at the planned composition depth (more details in \Cref{app:reduction-accumulation}).
When the bound exceeds that, the remedy is to characterize at a coarser granularity.
Directly measuring a larger composite as a single element replaces every bias term accumulated inside it with one measured signature, at its own measured uncertainty.
However, the tradeoff of coarser grained composition is reduced reusability: the composite's signature is only valid for the contexts it was characterized under, whereas the individual elements' signatures are valid for any composition that does not violate their mutual context-insensitivity.

\subsection{The Context-Independent Form}
\label{sec:ci-form}

When the Reduction Theorem (\Cref{prop:reduction}) applies, we opt into context-independent algebra, where every signature value is evaluated in isolation at the null context $\phi_{\text{null}}$, both~$\phi$ and~$\rho$ are dropped, and the bounded bias (\Cref{eq:reduction}) is accepted.
The signature reduces to point values, $S_e = (e, \theta, \kappa, \omega, \phi_{\text{null}}, s_e,\allowbreak T_e, d_e, \sigma^2_{T_e}, \sigma^2_{E_e}, M)$.
The semantics of the operators themselves do not change: the accounting of \Cref{sec:composition-operators} applies verbatim over the context-independent signatures.

\subsection{The Interaction Graph}
\label{sec:interaction-graph}

Pairs that fail mutual context-insensitivity fall outside the Reduction Theorem, and their interaction must be characterized. 
Locality keeps this set sparse for separated pairs (\axname{P4}), and capacity headroom for concurrent ones (\axname{P5}).
However, a failing pair does not force the whole composition into context-dependent treatment.
Once characterized, the pair composes with its signature evaluated at the contexts the composition realizes; every other pair composes at the null context, and only the latter contribute bias terms.
The characterized pair must in turn be interference-insensitive with respect to the remaining co-runners, with its co-run values in place of the isolation values in \Cref{eq:interference-insensitive}.

\paragraph{The interaction graph.}
Characterized interactions are recorded in the interaction graph.
Nodes are element \emph{types} rather than instances; for example, an attention layer that runs thousands of times per training step is one node, not thousands.
Edges connect pairs of types whose interaction has been characterized. 
There are two kinds of edges, corresponding to the entry and runtime contexts.
\begin{itemize}[nosep]
    \item \emph{Directed edge} ($A \to B$): an entry context interaction, where the state $A$ leaves behind perturbs $B$'s signature.
    \item \emph{Undirected edge} ($A \leftrightarrow B$): a runtime context interaction under co-execution.
        The edge carries each endpoint's perturbation separately, since interference can affect one element of a pair and not the other.
\end{itemize}
Because nodes are types, one edge covers all instances of the two types it connects, and characterization scales with the number of distinct types rather than the length of the execution.
Every edge carries the responsible mechanisms from \Cref{tab:interaction-categories} and the workload parameters and contexts over which the interaction was characterized, analogous to $\Phi$ and $\mathcal{P}$ in the signature.
The edge also references the characterization outcome of step 4 below, which is either the signature extended over the realized contexts or the composite pair's signature.
The measured deviation that triggered characterization is kept as edge metadata, but composition never consumes it.
Composition consumes only signatures, and the interaction graph is a sparse index into them.

Pairwise edges do not by themselves justify the reduction for a whole composition.
Individually negligible effects can accumulate.
Kernels in a long prefix can each heat the device by less than the pairwise bias tolerance while their aggregate shifts a later element past throttling, and any two of three concurrent elements can fit within a shared bandwidth that all three together exceed.
The applicability condition is therefore stated at the level of whole prefixes and sub-DAGs: for every element, its actual entry context under the composed schedule must lie in a characterized set over which it is history-insensitive, and its runtime context must lie in a characterized set over which it is interference-insensitive.
When either fails, the offending prefix becomes a composite element with its own signature and trace, and cumulative effects surface when the composite's measured energy deviates from its composed prediction (\Cref{sec:uncertainty}); the detected cause is then recorded in the interaction graph.

\paragraph{Building the interaction graph.}
The interaction graph starts empty and grows only as measured deviations justify characterization.

\begin{enumerate}[nosep]
    \item \textbf{Context-independent composition.} Compose signatures with every value at the null context in isolation, correct within the Reduction Theorem's tolerances for pairs satisfying its condition.

    \item \textbf{Deviation detection.} Compare composed predictions against direct measurement (\Cref{sec:uncertainty}).
        Pairs with consistent deviation are candidates for the interaction graph.

    \item \textbf{Mechanism identification.} For each candidate pair, identify the dominant mechanism from \Cref{tab:interaction-categories} and measure the interaction effect via the mechanism's characterization protocol.

    \item \textbf{Signature update.} Either characterize the element as a function of~$\phi$ or~$\rho$ (for reusable dependence) or characterize the composite pair as a single element (for pair-specific dependence).
        Either result composes further with the same operators.

    \item \textbf{Reuse.} The characterized signature (or composite) is reusable across any composition whose contexts fall within the characterized sets.
\end{enumerate}

\subsection{Axioms and Notation}
\label{sec:axiom-summary}

\begin{table}[t]
\centering
\caption{The seven axioms of energy calculus: physical (P) and methodological (M).}
\label{tab:axioms}
\newcolumntype{R}[1]{>{\raggedright\arraybackslash}p{#1}}
\begin{tabular}{ll R{0.33\textwidth} R{0.47\textwidth}}
\toprule
\textbf{Label} & \textbf{No.} & \textbf{Name} & \textbf{Statement (short form)} \\
\midrule
P1 & \ref{ax:P1} & Energy Decomposition & $E_e = s_e T_e + d_e$; all components non-negative. \\
P2 & \ref{ax:P2} & Per-Device Static Power & A device has a single static power; same-device concurrent execution draws it once over the makespan. \\
P3 & \ref{ax:P3} & Interaction Structure & Interaction effects decompose into a finite set of characterizable physical mechanisms. \\
P4 & \ref{ax:P4} & Interaction Locality & Entry-context effects decay with separation between elements. \\
P5 & \ref{ax:P5} & Interference Thresholding & Runtime-context effects are negligible below shared-resource capacity and grow abruptly past saturation. \\
\midrule
M1 & \ref{ax:M1} & Intrinsic Uncertainty & Error is summarized by mean and variance; for elements measured directly, static power is exact and component errors uncorrelated; uncertainty propagates through every composition. \\
M2 & \ref{ax:M2} & Interaction Measurability & Signatures are characterized by measurement under multiple contexts, with other variables fixed. \\
\bottomrule
\end{tabular}
\end{table}

\begin{table}[t]
\centering
\caption{Core notation.}
\label{tab:notation}
\small
\begin{tabular}{@{}ll@{\hspace{2.5em}}ll@{}}
\toprule
\textbf{Symbol} & \textbf{Meaning} & \textbf{Symbol} & \textbf{Meaning} \\
\midrule
$e$ & Energy element & $T_e$ & Execution time (s) \\
$\theta$ & Workload parameters & $d_e$ & Dynamic energy (J) \\
$\kappa$ & Implementation realizing the element's operation & $E_e$ & Total energy $s_e T_e + d_e$ (J) \\
$\omega$ & Operating point & $S_e$ & Energy signature \\
$\omega_{\text{dev}}$ & Device-wide part of $\omega$ & $\sigma^2_{T_e}, \sigma^2_{E_e}$ & Variances of measured time and total energy \\
$\omega_{\text{elem}}$ & Per-element part of $\omega$ & $\sigma^2_{d_e}$ & Variance of dynamic energy \\
$\phi$ & Entry context (state from prior execution) & $\seqop$ & Sequential composition \\
$\Phi$ & Set of characterized entry contexts & $\parop$ & Same-device parallel composition \\
$\phi_{\text{null}}$ & Null context (no state from prior execution) & $\parml$ & Cross-device parallel composition \\
$\rho$ & Runtime context & $c_\alpha$ & Two-sided Gaussian multiplier at confidence $1 - \alpha$ \\
$\mathcal{P}$ & Set of characterized runtime contexts & $\mathcal{D}_e$ & Set of devices hosting the element \\
$s_e$ & Static power of the device hosting $e$ (W) & & \\
\bottomrule
\end{tabular}
\end{table}

This concludes the development of foundational axioms and notation.
\Cref{tab:axioms} summarizes them, and \Cref{tab:notation} collects the core notation used in the remainder of the paper.

%% file: sections/algebra.tex
\section{Properties of Energy Calculus}
\label{sec:algebra}

\Cref{sec:calculus} developed the general context-dependent algebra, and the Reduction Theorem (\Cref{prop:reduction}) established when it reduces to the simpler context-independent algebra.
This section derives useful properties and rules for both.

\subsection{Algebraic Properties}
\label{sec:algebraic-properties}

In this section, we establish the algebraic properties of the composition operators.
All three operators are associative and closed, the parallel operators commute, and sequential composition does not distribute over parallel; these hold in the general context-dependent algebra and therefore in the context-independent algebra as well.\footnote{A context-independent evaluation is a context-dependent evaluation with every signature value taken at the null context in isolation, so identities that hold over all contexts instantiate to the context-independent algebra. Non-distributivity, a negative result, transfers through its counterexample, which is built from null-context point values.}
Additionally, in the context-independent algebra, sequential composition commutes and monotonicity holds.
The properties and their corollaries are summarized in \Cref{tab:properties}.

\begin{table}[t]
\centering
\caption{Properties and corollaries of the composition algebra (short form).
Full statements appear in the body; proofs in appendices.}
\label{tab:properties}
\newcolumntype{R}[1]{>{\raggedright\arraybackslash}p{#1}}
\begin{tabular}{l R{0.17\textwidth} R{0.63\textwidth}}
\toprule
\textbf{No.} & \textbf{Name} & \textbf{Statement (short form)} \\
\midrule
\Cref{prop:commutativity} & Commutativity & $\parop$ and $\parml$ commute unconditionally; $\seqop$ does not commute in general.
\emph{Under the Reduction Theorem}, $\seqop$ commutes in the context-independent algebra (\Cref{cor:ci-commutativity}). \\
\Cref{prop:associativity} & Associativity & All three operators are associative; groupings are equivalent, orderings are not. \\
\Cref{prop:closure} & Closure & Every composition is itself an energy element with a well-defined signature. \\
\Cref{prop:non-distributivity} & Non-Distributivity & $\seqop$ does not distribute over parallel composition; the sides differ by the duplicated dynamic energy. \\
\Cref{prop:monotonicity} & Monotonicity & Component-wise no-smaller signatures can decrease a composition under context dependence.
\emph{Under the Reduction Theorem}, they never decrease any composition in the context-independent algebra (\Cref{cor:ci-monotonicity}). \\
\bottomrule
\end{tabular}
\end{table}

\subsubsection{Commutativity}

Consider a high-power element that leaves the device hot, followed by a thermally sensitive element that throttles when hot (\Cref{sec:mechanisms}).
Also consider the same two elements executed in the opposite order---if the thermally sensitive element runs first on a cooler device, it may avoid throttling and consume different energy.
Therefore, swapping the order of sequential composition can in general change the total energy, unlike parallel compositions that are structurally symmetric.

\begin{property}[Commutativity]
\label{prop:commutativity}
Parallel operators $\parop$ and $\parml$ are unconditionally commutative, in both the context-dependent and context-independent algebras:
\begin{align}
    E_1 \parop E_2 &= E_2 \parop E_1 \\
    E_1 \parml E_2 &= E_2 \parml E_1
\end{align}
Sequential composition does not commute in general:
\begin{align}
    E_1 \seqop E_2 &\neq E_2 \seqop E_1.
\end{align}
\end{property}

\begin{proof}[Proof sketch]
Parallel commutativity follows from symmetry of $\max$ and $+$: both elements evaluate at the shared entry context $\phi_0$ and the shared runtime context $\rho = \{e_1, e_2\}$, leaving no label-dependent term.
For the sequential case, let $\phi_0$ be the shared entry context, with the composition running alone so that runtime contexts are singletons and suppressed.
Then $E_1 \seqop E_2 = s_1(\phi_0) T_1(\phi_0) + d_1(\phi_0) + s_2(\phi_1) T_2(\phi_1) + d_2(\phi_1)$ with $\phi_1$ the exit context of $e_1$'s execution from $\phi_0$, and $E_2 \seqop E_1 = s_2(\phi_0) T_2(\phi_0) + d_2(\phi_0) + s_1(\phi_2) T_1(\phi_2) + d_1(\phi_2)$ with $\phi_2$ the exit context of $e_2$'s execution from $\phi_0$.
In general $\phi_1 \neq \phi_2$, and the two sums differ.
The full proof appears in \Cref{app:proof-commutativity}.
\end{proof}

Non-commutativity reflects context propagation: execution order changes what each element inherits from its predecessor.
When the elements are mutually context-insensitive, however, the Reduction Theorem removes the order dependence.

\begin{corollary}
\label{cor:ci-commutativity}
When $e_1$ and $e_2$ are mutually context-insensitive under both orderings, the Reduction Theorem (\Cref{prop:reduction}) reduces both orderings to the same context-independent value: sequential composition commutes in the context-independent algebra.
The physical orderings themselves agree within their summed reduction bounds, $c_\alpha (\sigma_{E_{e_1}} + \sigma_{E_{e_2}})$ when the composition starts at the null context.
\end{corollary}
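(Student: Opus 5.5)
The plan is to apply \Cref{prop:reduction} once to each ordering and compare the two results. Doing this directly gives a bound twice as large as the one claimed, so the composition must be evaluated in a more refined way.

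\textbf{Step 1: the two context-independent values coincide.} Because both orderings are mutually context-insensitive, \Cref{prop:reduction} applies to $e_1 \seqop e_2$ and to $e_2 \seqop e_1$. The reduced operator $\seqop^{(0)}$ evaluates every signature value at $\phi_{\text{null}}$ in isolation. By \Cref{eq:sequential}, this gives
\[
E_1 \seqop^{(0)} E_2 = s_1 T_1 + d_1 + s_2 T_2 + d_2 .
\]
Each summand depends only on its own element's null-context point values, and addition of reals is commutative. Hence $E_1 \seqop^{(0)} E_2 = E_2 \seqop^{(0)} E_1$, which is the context-independent commutativity claim.

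\textbf{Step 2: why the naive comparison is too loose.} Assume the composition starts at $\phi_0 = \phi_{\text{null}}$. The remark after \Cref{prop:reduction} lowers the multipliers to $m_1 = 0$, $m_2 = 1$ for $e_1 \seqop e_2$, and symmetrically for the other ordering. Comparing the two orderings through their common reduced value with the triangle inequality would then give only $2c_\alpha(\sigma_{E_{e_1}} + \sigma_{E_{e_2}})$ in general. This is too weak, so I will compare the orderings term by term instead.

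\textbf{Step 3: term-by-term comparison at the null start.} At the null start, the element that runs first is evaluated exactly at $\phi_{\text{null}}$ in both orderings, so its realized term equals its reduced term. The sequential composition runs alone, so runtime contexts are singletons. This gives
\[
E_1 \seqop E_2 - E_2 \seqop E_1 = \bigl[E_{e_2}(\phi_1) - E_{e_2}(\phi_{\text{null}})\bigr] - \bigl[E_{e_1}(\phi_2) - E_{e_1}(\phi_{\text{null}})\bigr].
\]
Here $\phi_1$ is the exit context of $e_1$ run from $\phi_{\text{null}}$, and $\phi_2$ is the exit context of $e_2$ run from $\phi_{\text{null}}$.

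By \Cref{def:mutual-context-insensitive}, each element is history-insensitive with respect to the null context together with the entry context the composition realizes for it. Applying \Cref{def:history-insensitive} to $e_2$ over $\{\phi_{\text{null}}, \phi_1\}$, and to $e_1$ over $\{\phi_{\text{null}}, \phi_2\}$, bounds each bracket by $c_\alpha \sigma_{E_{e_i}}$. This uses the \axname{P1} expansion in \Cref{app:expansion}: $|E_e(\phi) - E_e(\phi')|$ is at most the left-hand side of \Cref{eq:history-insensitive}. The triangle inequality then gives
\[
\bigl|E_1 \seqop E_2 - E_2 \seqop E_1\bigr| \le c_\alpha\bigl(\sigma_{E_{e_1}} + \sigma_{E_{e_2}}\bigr),
\]
which is the stated bound.

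\textbf{Main obstacle.} The delicate step is justifying that the entry context each second element actually sees lies in the set over which history-insensitivity was certified. Concretely, $\phi_1$ must be covered for $e_2$ and $\phi_2$ for $e_1$. This is exactly why the hypothesis demands mutual context-insensitivity \emph{under both orderings}, rather than for one ordering only. A second subtlety is the claim that the first element's term is identical in both orderings. I would make this explicit through \Cref{eq:sequential}: the first element's entry context is the composite's entry context $\phi_0 = \phi_{\text{null}}$, and its runtime context is $\{e_i\}$ in either order.

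For a general (non-null) start $\phi_0$, the same term-by-term argument still works. The first element in each ordering now also needs an entry-state substitution from $\phi_0$ to $\phi_{\text{null}}$, which adds at most one more tolerance per element. This yields the looser bound $2c_\alpha(\sigma_{E_{e_1}} + \sigma_{E_{e_2}})$, consistent with the statement restricting the sharp bound to the null start.
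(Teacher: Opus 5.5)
Your argument is correct and is essentially the paper's proof: both orderings reduce to the same null-context sum $s_1T_1+d_1+s_2T_2+d_2$, and at a null start each physical ordering differs from that common value only through its second element's history substitution, which \Cref{def:history-insensitive} and the expansion of \Cref{app:expansion} bound by one tolerance. Your Step~2 is mistaken, though. With the null-start multipliers ($0$ for the first element, $1$ for the second), comparing through the common reduced value already gives $|E_1\seqop E_2 - E_1\seqop^{(0)}E_2|\le c_\alpha\sigma_{E_{e_2}}$ and $|E_2\seqop E_1 - E_1\seqop^{(0)}E_2|\le c_\alpha\sigma_{E_{e_1}}$, hence exactly $c_\alpha(\sigma_{E_{e_1}}+\sigma_{E_{e_2}})$. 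That is the route the paper takes, and your Step~3 difference is the same computation written without the intermediate value.
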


Note that the corollary's equality is for total-energy values, not for full composite signatures: two orderings can consume the same energy while leaving different exit contexts, so one ordering substitutes for the other inside a larger composition only when their exit contexts are equivalent over the downstream admissible context set.
Reordering is also available only where the schedule's dependencies permit both orders.

\subsubsection{Associativity}

Whether a three-stage pipeline is modeled as the first two stages followed by the third, or the first followed by the latter two, is bookkeeping: the stages execute in the same order either way, and the total energy is the same.

\begin{property}[Associativity]
\label{prop:associativity}
All three operators are associative, in both context-dependent and context-independent algebras:
\begin{equation}
\begin{aligned}
    (E_1 \seqop E_2) \seqop E_3 &= E_1 \seqop (E_2 \seqop E_3) \\
    (E_1 \parop E_2) \parop E_3 &= E_1 \parop (E_2 \parop E_3) \\
    (E_1 \parml E_2) \parml E_3 &= E_1 \parml (E_2 \parml E_3)
\end{aligned}
\end{equation}
Associativity of~$\seqop$ preserves the order of execution: it is the equivalence of groupings of one composition, not of orderings.
\end{property}

\begin{proof}[Proof sketch]
Associativity of $\parop$ and $\parml$ follows because grouping changes no element's contexts---the same elements co-start and co-run either way---and their energy formulas (\Cref{eq:same-device-parallel,eq:cross-device-parallel}) are built from addition and maximum, both associative.
For $\seqop$, context propagates forward with execution order regardless of grouping, so each element inherits the same context in either grouping.
\Cref{app:proof-associativity} provides the full proof.
\end{proof}

\subsubsection{Closure}

Composition operators take signatures as operands, and they produce a well-defined signature as output.

\begin{property}[Closure]
\label{prop:closure}
A composition of energy elements via any operator is itself an energy element with a well-defined energy signature, in both the context-dependent and context-independent algebras.
\end{property}

\begin{proof}[Proof sketch]
Most signature entries are trivially well-defined: the composite inherits its workload parameters, implementation, and operating point from the operands, and its duration $T'$, dynamic energy $d'$, and duration variance follow the operator's time and dynamic accounting.
The proof shows that the remaining three---static power, energy variance, and the composite's contexts---are also well-defined.
\axname{P1} fixes static power to the non-negative summary $s' = (E_1 \,\mathrm{op}\, E_2 - d')/T'$, a derived quantity rather than a device's baseline draw.
The variance propagated by \Cref{prop:var-propagation} is carried in the signature directly.
The composite's entry and exit contexts follow from its parts, and each operand's runtime context is recovered from the composite's schedule and the operand's position in it.
\Cref{app:proof-closure} writes out the precise formulas and completes the proof.
\end{proof}

Closure enables hierarchical composition.
An element composed within the calculus keeps its term: an enclosing composition still sees its parts, and evaluation derives each constituent's contexts, as the proof sketch describes.
An element measured directly has no parts to see; its signature stands on its own over the contexts it was characterized under ($\Phi$ and $\mathcal{P}$).
Either way, the same operators apply at every level of the hierarchy, and context propagation is handled uniformly.
Within a single composition, some elements may be composed and others directly measured, some fine-grained and others coarse.

\subsubsection{Distributivity}

Applying distributivity to a sequential composition ($\seqop$) over a parallel composition ($\parop$ or $\parml$) would \emph{duplicate} the first element, which is not equivalent to what the original composition executes.

\begin{property}[Non-Distributivity]
\label{prop:non-distributivity}
Sequential composition does \emph{not} distribute over parallel composition, in both the context-dependent and context-independent algebras:
\begin{align}
    E_1 \seqop (E_2 \parop E_3) &\neq (E_1 \seqop E_2) \parop (E_1 \seqop E_3) \\
    E_1 \seqop (E_2 \parml E_3) &\neq (E_1 \seqop E_2) \parml (E_1 \seqop E_3)
\end{align}
\end{property}

\begin{proof}[Proof sketch]
On the right-hand side, both branches contain $e_1$, so $e_1$ executes twice.
Expanding both sides in the context-independent algebra, the static terms coincide---the makespan is the same under either structure---and the two sides differ by exactly $d_1$, the dynamic energy of the duplicated execution.
The full proof appears in \Cref{app:proof-distributivity}.
\end{proof}

\subsubsection{Monotonicity}

When two elements run back to back, the first element's exit context can make the second element cheaper or more expensive than it would be in isolation.
This creates cases that break monotonicity: replacing the earlier element with one that consumes more energy but leaves a context that makes the later element much cheaper (e.g., more cache-friendly output or a cooler device) can lower the total energy of the composition.

\begin{property}[Non-Monotonicity]
\label{prop:monotonicity}
In the context-dependent algebra, total energy is not monotone in a composition's sub-elements: it is possible that replacing a sub-element with one that is no cheaper in any signature component---static power, execution time, and dynamic energy---can \emph{decrease} the composition's total energy.
\end{property}

The failure comes from replacement changing the exit context the rest of the composition inherits: replacing $e_1$ with a more expensive $e'_1$ lowers the total of $e'_1 \seqop e_2$ below that of $e_1 \seqop e_2$ whenever $e'_1$ leaves a context that makes $e_2$ sufficiently cheaper.
\Cref{app:proof-monotonicity} constructs an explicit instance.

In the context-independent algebra, however, every signature value is evaluated at the null context, so a replacement cannot change what the rest of the composition evaluates at.

\begin{corollary}
\label{cor:ci-monotonicity}
In the context-independent algebra, replacing a sub-element with one whose signature has no smaller static power, execution time, and dynamic energy never decreases the energy of any composition that contains it.
Here, replacing means swapping that element's signature alone, with the schedule still feasible, shared operating parameters fixed, and every other signature unchanged.
\end{corollary}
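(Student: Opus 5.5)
The plan is to argue by structural induction on the operator term, using only the explicit energy formulas \Cref{eq:sequential,eq:same-device-parallel,eq:cross-device-parallel} evaluated with null-context isolation values. In the context-independent algebra every operand is a point signature $(s_e, T_e, d_e)$, and the only inputs an operator reads from the other operands are the shared operating parameters and the schedule. Both of these are fixed by hypothesis. So the energy of any composition is a function of the replaced element's triple $(s, T, d)$, with every other value held constant, and it suffices to show that this function is nondecreasing in each coordinate.

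First I check each operator as a function of one operand's triple, with the other operand's triple fixed.

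For $\seqop$ the energy is $s_1T_1 + d_1 + s_2T_2 + d_2$. This is nondecreasing in $s_i$, $T_i$ and $d_i$, because every component is nonnegative by \axname{P1}, so the products $s_iT_i$ are monotone in each factor.

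For $\parop$ the energy is $s\max(T_1,T_2) + d_1 + d_2$. Here $\max$ is nondecreasing in each argument and $s\geq 0$. The one delicate point is the static power: $s$ is shared per device (\axname{P2}), so raising the replaced element's static power means the device's $s$ is larger. That still only increases the expression, and shared operating parameters are fixed anyway.

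For $\parml$ the sum of $s_iT_i + d_i$ is again monotone. The equal-duration requirement is covered by feasibility: if $T$ grows, the other branches are padded with idle elements (or the schedule is re-padded), which adds nonnegative static energy. Equivalently, the padded composition is a $\seqop$ with idle elements, whose durations are nondecreasing in $T$.

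Next I lift monotonicity to arbitrary terms. By \Cref{prop:closure}, a composite is an element whose energy, duration $T'$ and dynamic energy $d'$ are given by the operator's accounting. I show by induction that $T'$, $d'$ and the composite's energy $E'$ are each nondecreasing in the replaced leaf's triple. Time composes by $+$ (for $\seqop$) or $\max$ (for the parallel operators), dynamic energy by $+$, and energy by the formulas above. All of these are monotone combinations of nonnegative quantities, so monotonicity propagates upward through the term.

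The main obstacle is that the enclosing operator reads the composite's derived static power $s' = (E'-d')/T'$, and $s'$ need not be monotone. I would sidestep this by never passing through $s'$. At the enclosing level I write the energy directly as a sum of the parts' energies (for $\seqop$ and $\parml$). For $\parop$ I use physical static power times makespan plus the dynamic energies, via the exact per-device decomposition of \Cref{sec:dag-composition}: each device contributes its physical static power $\times$ makespan plus dynamic energies. That total is manifestly monotone in the makespan (a monotone function of all $T$'s), in the $d$'s, and in the static powers.

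Since energy is monotone along each coordinate separately, the result follows by changing one coordinate at a time from $(s,T,d)$ to $(s',T',d')$.
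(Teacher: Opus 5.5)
Your argument matches the paper's proof. It establishes monotonicity per operator from the monotonicity of $+$, $\max$, and products of non-negative factors, then inducts on composition depth through the composite's duration, dynamic energy, and total energy rather than the derived $s'$. One slip: under $\parml$ the idle padding on the replaced element's own device shrinks as its duration grows, so not every idle duration is nondecreasing, and the total stays monotone only because that idle element draws no more than the element's static power (the paper takes it equal, writing the static term as $(s'_1 + s_2)\max(T'_1, T_2)$)---exactly your per-device ``static power $\times$ makespan'' accounting, which you should invoke for $\parml$ as well, not just $\parop$.
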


\begin{proof}[Proof sketch]
Each operator's energy formula is monotone in $(s_i, T_i, d_i)$, since addition and maximum are both monotone, and monotonicity carries over to hierarchies because an enclosing composition consumes a sub-composite only through its duration, dynamic energy, and total energy (\Cref{prop:closure}), all weakly increased by such a replacement.
The full proof appears in \Cref{app:proof-monotonicity}.
\end{proof}

\Cref{cor:ci-monotonicity} makes conservative bounds useful: replacing uncertain signatures with upper bounds yields a valid upper bound on the composed energy in the context-independent algebra.

\subsection{Uncertainty and Measurement}
\label{sec:uncertainty}

Every measured signature component carries noise, and a composed prediction inherits the noise of every element it combines.
Physical measurement of energy is authoritative, and divergence between composed predictions and direct measurement signals a composition assumption violation, an incomplete characterization, or a measurement limitation.
This subsection first derives how uncertain a composed prediction is, and then lays out three places where the propagated uncertainty is useful.

\subsubsection{Propagating Uncertainty}

A composed prediction's variance combines the variances of its elements.
This needs the elements' distributions to stay fixed and their measurement errors to be independent.
The first is the context-independent algebra's standing assumption: under mutual context-insensitivity, every execution of an element is a draw from its null-context characterization.\footnote{Each execution of an element is a draw from a distribution, and in the general context-dependent case, the distribution itself shifts with the element's entry and runtime contexts (i.e., not fixed).
However, under mutual context-insensitivity, the distribution is characterized at the null context in isolation and fixed, so every execution is, within the bias tolerance, a draw from that fixed distribution.}
The second is an assumption we add; it holds largely in practice because run-to-run time jitter and per-run dynamic-energy noise belong to each element, and characterization protocols keep measurement conditions (e.g., ambient temperature) fixed across runs (\axname{M2}).

\begin{property}[Variance Propagation]
\label{prop:var-propagation}
The total energy variance of a single directly measured element is given by \Cref{eq:energy-variance}.
In the context-independent algebra, with measurement errors independent between the two elements and the times $(T_1, T_2)$ jointly independent of the dynamic energies $(d_1, d_2)$:
\begin{align}
\label{eq:var-sequential}
    \sigma^2_{E_1 \seqop E_2} &= \sigma^2_{E_1} + \sigma^2_{E_2} \\
    \sigma^2_{E_1 \parop E_2} &= \sigma^2_{d_1} + \sigma^2_{d_2} + s^2\, \mathrm{Var}\bigl(\max(T_1, T_2)\bigr) \\
\label{eq:var-parallel-max}
    \sigma^2_{E_1 \parml E_2} &= \sigma^2_{d_1} + \sigma^2_{d_2} + (s_1 + s_2)^2\, \mathrm{Var}\bigl(\max(T_1, T_2)\bigr)
\end{align}
For $\parop$, $s$ is the shared static power of the device (i.e., $s = s_1 = s_2$).
\end{property}

The sequential rule adds the operands' carried energy variances and takes any operands.
The parallel rules split each operand into its time and dynamic variances, because static power converts time noise into energy noise: when a duration fluctuates by $\Delta T$, energy fluctuates by $\Delta T$ times the static power that keeps flowing through those seconds, and a scale factor on a fluctuation enters its variance squared.
For $\parml$ the weight is $(s_1 + s_2)^2$, because a makespan fluctuation holds both devices longer and their combined static draw flows through it.
This split is why parallel operands must each draw one constant static power, as elements measured directly and constant-baseline composites do (\Cref{prop:closure}).
\Cref{app:variance-sequential,app:variance-parallel} derive the three equations, and \Cref{app:variance-max} gives closed forms and limit cases for $\mathrm{Var}(\max(T_1, T_2))$.

Under the context-dependent algebra, the fluctuation of elements can couple through shared contexts (e.g., a run where $e_1$ runs hot or long shifts the context $e_2$ inherits), and independence fails.
In this case, a contiguous segment containing the coupled elements should be characterized as a single composite (\Cref{sec:interaction-graph}).
The composite's variances are measured directly, independence holds at its boundary, and propagation via \Cref{prop:var-propagation} proceeds between insensitive units as before.

\subsubsection{Using Propagated Uncertainty}

The machinery below consumes a composed prediction $\hat{E}$ and its propagated standard deviation $\sigma_{\hat{E}}$ from \Cref{prop:var-propagation}, and applies under either algebra.

\paragraph{Prediction intervals and comparison.}
From the propagated variance, we can more reliably compare the energy consumption of different configurations based on intervals rather than points.
A composed prediction carries the interval $\hat{E} \pm c_\alpha \cdot \sigma_{\hat{E}}$, covering a single run of the composition at confidence $(1 - \alpha)$ (if it's the average over $n$ repeated runs tightens to $\hat{E} \pm c_\alpha \sigma_{\hat{E}} / \sqrt{n}$).\footnote{Note that the Gaussian shape behind $c_\alpha$ is still an approximation: sums of many weakly dependent errors approach Gaussian by the central limit theorem, but the maxima of parallel composition do not.}
We can confidently say that one configuration consumes less energy than another only when its interval's upper bound lies below the other's lower bound; comparing point predictions alone would let run-to-run noise decide.

\paragraph{Deviation detection.}
Propagated uncertainty also serves as a diagnostic.
We compose signatures to predict a composition's energy, then measure the composition directly.
The two disagree when
\begin{equation}
\label{eq:deviation}
    |\Epred - \Emeas| \;>\; c_\alpha \cdot \sqrt{\sigma^2_{\Epred} + \sigma^2_{\Emeas}},
\end{equation}
where $\sigma^2_{\Epred}$ is the prediction's propagated variance and $\sigma^2_{\Emeas}$ the direct measurement's own variance.
The variance of the difference is the sum of the two variances because the two are independent: the measurement is a fresh run, while the prediction inherits its error from characterization.
Agreement within propagated uncertainty means the composition model is consistent with measurement at the current precision.
Disagreement means an assumption failed: an interaction the composition does not capture, a measurement that needs recalibration, or a signature applied outside its characterized range.
Even under a correct model, the test triggers with probability $\alpha$, and testing many pairs multiplies such false triggers; recharacterization should follow persistent triggers, not single ones.
The workflow of \Cref{sec:interaction-graph} then localizes and characterizes the cause.

The right-hand side of \Cref{eq:deviation} carries random error only.
On top of this random error, the bias accepted by the Reduction Theorem accumulates across composed elements ($\sum_i m_i \, c_\alpha \, \sigma_{E_{e_i}}$, \Cref{app:reduction-accumulation}), and over deep compositions can exceed the propagated interval.
This composite-level check is what catches it.

\paragraph{Refinement and coarsening.}
The deviation test (\Cref{eq:deviation}) also governs granularity changes.
We \emph{refine} an element by decomposing it into sub-elements and composing their signatures, and \emph{coarsen} a set of composed elements into a single element with an aggregate signature; the two are inverses up to measurement error.

\begin{property}[Refinement--Coarsening Consistency]
\label{prop:refinement}
Let element~$e$ have directly measured signature~$S_e$, and let~$e$ be refined into sub-elements $e_1, \ldots, e_n$ composed via the appropriate operators, yielding composed signature~$\hat{S}_e$.
If the sub-elements are mutually context-insensitive, then at confidence level $(1 - \alpha)$,
\begin{equation}
\label{eq:refinement}
    |E_e - \hat{E}_e| \;\leq\; \sum_i m_i \, c_\alpha \, \sigma_{E_{e_i}} \;+\; c_\alpha \cdot \sqrt{\sigma^2_{E_e} + \sigma^2_{\hat{E}_e}},
\end{equation}
where the first term is the composition's accumulated bias bound (\Cref{eq:reduction}); it drops to zero when each $e_i$'s signature is instead correctly characterized at every entry and runtime context the refinement realizes.
\end{property}

When the hypothesis fails, the deviation test triggers, meaning the diagnostic is working as intended.
This gives consistency checks across granularities; the full statement and proof appear in \Cref{app:refinement}.

%% file: sections/frontier.tex
\section{Application: Operating Frontiers Under Context-Independence}
\label{sec:operating-frontier}

We extend energy calculus from composing energy totals to composing time--energy Pareto frontiers.
This section operates under the context-independent algebra, where composed elements are mutually context-insensitive at every implementation and operating point.
Elements that are not mutually context-insensitive are instead characterized together as a single composite (\Cref{sec:interaction-graph}), which this section treats as one element.

\subsection{Operating Set and Frontier}
\label{sec:frontier-defined}

The \emph{operating set}~$\Omega_e$ of element~$e$ is the set of operating points at which $e$ can execute.
An element may also have multiple implementations~$\kappa$ of its operation (\Cref{sec:energy-elements}).
At a fixed workload parameter (thus suppressed from the argument), sweeping the pairs $(\kappa, \omega)$ over the element's implementations and operating points maps each pair to a point on the time--energy plane: $(T_e(\kappa, \omega),\, E_e(\kappa, \omega))$ where $E_e(\kappa, \omega) = s_e(\omega_{\text{dev}})\, T_e(\kappa, \omega) + d_e(\kappa, \omega)$ (\axname{P1}).\footnote{Static power moves with the device-wide operating point $\omega_{\mathrm{dev}}$. DVFS, for instance, changes a device's ready-state static power draw.}
The \emph{achievable set} of $e$ is the set of these swept points, and from the achievable set, we can extract the \emph{operating frontier} $\mathcal{F}_e$ of $e$ by removing Pareto-dominated points.
Frontiers of this form arise at every granularity, from a single GPU kernel to model layers~\cite{kareus:osdi26}, training iterations~\cite{perseus:sosp24}, and entire training jobs~\cite{zeus:nsdi23}.

Composing per-element operating-point schedules assumes that switching between operating points takes negligible time and energy compared to the elements' time and energy.
When switching time is non-negligible, that influences the granularity at which operating-point switching can be applied.
Non-trivial switching time and energy can be accounted for by a transition element $e_{\omega \to \omega'}$, with its own signature composing like any other element.

\subsection{Frontier Composition}
\label{sec:frontier-composition-insensitive}

With each element carrying its own operating set and frontier, we now discuss how the frontiers of composed elements are obtained.
We begin with the three composition operators of \Cref{sec:composition-operators}, and then discuss general composition.

\subsubsection{Sequential Composition}

Consider two elements that run back to back on one device, each free to pick its own implementation and operating point.
Durations add and energies add, so each choice of implementations and operating points lands at the component-wise sum of two swept points.
The set of all pairwise sums of two point sets is therefore their \emph{Minkowski sum}, written:
\begin{equation}
  \mathcal{F}_{e_1} \boxplus \mathcal{F}_{e_2} = \{(T_1 + T_2,\, E_1 + E_2) : (T_1, E_1) \in \mathcal{F}_{e_1},\, (T_2, E_2) \in \mathcal{F}_{e_2}\}.
\end{equation}

\begin{property}[Sequential Frontier Composition]
\label{prop:frontier-sequential}
Let $e_1$ and $e_2$ be mutually context-insensitive at every choice of implementations and operating points.
Under $\seqop$,
\begin{equation}
    \mathcal{F}_{e_1 \seqop e_2} \;=\; \mathrm{Pareto\text{-}min}\bigl(\mathcal{F}_{e_1} \boxplus \mathcal{F}_{e_2}\bigr).
\end{equation}
\end{property}

\begin{proof}[Proof sketch]
Each choice of implementations and operating points sums the two elements' durations and energies, so the achievable set is exactly the pairwise-sum set.
Replacing either element's swept point with a frontier point that dominates it improves the composite in at least one coordinate without worsening the other, so restricting the sweep to the elements' own frontiers loses no composite-optimal point.
The full proof appears in \Cref{app:frontier-sequential}.
\end{proof}

\subsubsection{Parallel Composition}

Now let the two elements run concurrently, on one device or on two.
Whichever element finishes first, every participating device is held until the later one finishes (\Cref{sec:composition-operators}), so static power is drawn for the shared makespan no matter how fast either element runs.
Under $\parop$, a sweep chooses implementations and operating points, with the operating points sharing the device-wide part $\omega_{\text{dev}}$.
Across devices, $\parml$ itself requires equal durations, so wherever a sweep leaves them unequal, an idle element pads the earlier-finishing device to the makespan before $\parml$ applies (\Cref{sec:cross-device-parallel}); each device then draws its own static power, with no shared-$\omega_{\text{dev}}$ constraint.
At each pair of implementations, each operating-point pair evaluates to
\begin{equation}
\begin{aligned}
    E_{e_1 \parop e_2}(\omega_1, \omega_2) &= s(\omega_{\text{dev}}) \cdot T_{\max} + d_1(\omega_1) + d_2(\omega_2) \\
    E_{e_1 \parml e_2}(\omega_1, \omega_2) &= \bigl(s_1(\omega_1) + s_2(\omega_2)\bigr) \cdot T_{\max} + d_1(\omega_1) + d_2(\omega_2)
\end{aligned}
\qquad T_{\max} = \max\bigl(T_1(\omega_1), T_2(\omega_2)\bigr).
\end{equation}
For $\parop$, $s$ is the shared static power of the device ($s = s_1 = s_2$).
For $\parml$, the display takes each padded device's idle draw equal to its active draw; \Cref{app:frontier-parallel} keeps the two distinct.
The makespan term couples the two sweeps, so unlike the sequential case, the composite frontier is not a Minkowski sum.
Instead, it is obtained by sweeping the elements' implementations and operating points in full and keeping the Pareto-optimal points.

The makespan term also gives the composite frontier its right end.
Slowing the longer-running element typically saves its dynamic energy but stretches the makespan, over which every participating device keeps drawing static power.
Once further slowdown saves less dynamic energy than the added static energy over the extra time, total energy turns back upward, and the slower points are Pareto-dominated.
The composite frontier therefore terminates at the energy-minimizing makespan.
Empirically, Perseus~\cite{perseus:sosp24} observes this right-end upturn in its measured iteration-level time--energy curve.

One interesting observation is that when sweeping the two elements' implementations and operating points to obtain points on the composite's time--energy plane, restricting the sweep to the elements' own Pareto-optimal points is unsound: a swept point that is Pareto-suboptimal for an element may be part of a Pareto-optimal point for the composite.
When one element finishes before the other, the window from its finish to the makespan is its \emph{slack}.
Suppose that the earlier-finishing element's device draws the same static power at every swept operating point.
That static power is then drawn for the full makespan no matter what the element does, so the element affects the composite total only through its dynamic energy, not total energy.
However, the element's own frontier is about Pareto-optimal time and \emph{total} energy, not dynamic energy.
The result is that swept points \emph{past} the element's minimum total energy point can be Pareto-optimal in terms of time and \emph{dynamic} energy, and therefore part of the composite's Pareto-optimal frontier.
This generalizes to an operational rule:

\begin{property}[Slack-to-Energy Conversion]
\label{prop:slack-conversion}
Let $e_1$ and $e_2$ be parallel composed elements with durations $T_1 < T_2$, so that the makespan is $T_{\max} = T_2$ and $e_1$ has slack.
Suppose $e_1$ can be slowed, moving its signature components from $(s_1, T_1, d_1)$ to $(s_1', T_1', d_1')$ with $T_1 \leq T_1' \leq T_{\max}$.

\noindent On a shared device, suppose further that the move changes only $e_1$'s implementation or the per-element part of its operating point, leaving the device-wide part, and with it the device's static power and $e_2$'s signature, unchanged.
The move then does not increase composite energy at unchanged makespan whenever:
\begin{enumerate}[nosep, label=(\roman*)]
    \item $d_1' \leq d_1$.
\end{enumerate}
On separate devices, the move does not increase composite energy at unchanged makespan whenever the following hold in addition to (i):
\begin{enumerate}[nosep, label=(\roman*), resume, itemsep=2pt]
    \item $s_1' \leq s_1$; and
    \item $s_1'$ does not exceed the draw of the state $e_1$'s device idles in.
\end{enumerate}
\end{property}

\begin{proof}[Proof sketch]
On a shared device, the hypothesis fixes the device's static power and $e_2$'s signature, and the makespan is unchanged since $T_1' \leq T_{\max}$, so the composite's energy changes by $d_1' - d_1$, non-positive under condition~(i).
On separate devices, the energy change decomposes into three terms: the change in dynamic energy, the change in static energy over the element's execution, and the exchange of idle time for computation over the added duration.
Conditions (i)--(iii) make the corresponding terms non-positive.
The full proof appears in \Cref{app:frontier-parallel}.
\end{proof}

Condition (iii) is the race-to-idle decision.
Filling the slack window with slowed computation wins when idling draws more static power than the slowed computation does.
Hardware with a deep low-power idle state instead makes racing to finish and idling the cheaper schedule.

The actual method for slowing an element depends on how elements are composed and how their operating points are defined.
On a shared device, changing $e_1$'s implementation or the per-element part of its operating point can control its time and dynamic energy; the device-wide part couples both elements' time, dynamic energy, and static power, requiring joint consideration.
On separate devices, device-wide operating points can control elements independently (e.g., DVFS on each device).

\subsubsection{General Composition}

The rules above apply one operator at a time, with the composition term fixed.
Deriving the optimal frontier of a DAG-composed element is a significantly harder problem due to a \emph{global} property: changing the implementation or operating point of one element (and therefore its duration) potentially affects every other element's schedule and amount of slack.
This problem, in fact, contains as a special case the discrete time--cost tradeoff problem of project planning, which is strongly NP-hard~\cite{dtct-complexity} and even APX-hard assuming standard conjectures~\cite{dtct-apx-hardness}.
Perseus addresses this by solving a continuous relaxation of the problem exactly~\cite{perseus:sosp24}.

%% file: sections/discussion.tex
\section{Discussion}
\label{sec:discussion}

This section examines what the framework covers, what lies outside its scope, and what remains open.

\subsection{What the Framework Covers}
\label{sec:coverage}

Energy calculus applies to any computational system that meets four conditions: 
(i) elements admit a static--dynamic decomposition whose first-order error is acceptable at the chosen granularity (\axname{P1}); 
(ii) interactions belong to a finite mechanism inventory, each summarizable by a small number of effective state variables (\axname{P3});
(iii) entry-context effects decay with separation (\axname{P4}) and interference stays below shared-resource capacity (\axname{P5}), so that the interaction graph can remain sparse;
and (iv) measurement instrumentation resolves the chosen granularity, so that the insensitivity tests are informative rather than vacuous (\axname{M1}, \axname{M2}).
Conditions (i) through (iii) ask that the axioms' idealizations hold to acceptable accuracy, and (iv) asks that measurement can detect when they do not (\axname{P2} holds by construction, since device boundaries are drawn where static power is drawn once).
We expect most modern ML workloads (training, inference, fine-tuning), data-processing workloads, and scientific simulations on GPU and CPU hardware to satisfy these conditions, with direct verification discussed in \Cref{sec:open-problems}.

Under them, the calculus supports five capabilities, subject to empirical validation in subsequent work.

\begin{enumerate}[nosep]
    \item \emph{Compositional prediction.}
    Characterized signatures compose algebraically, so the calculus can predict total energy and propagated uncertainty for workload variations (schedule changes, parallelism adjustments, element substitution) without re-running the workload.

    \item \emph{Reordering as a first-class energy lever.}
    Conditional commutativity (\Cref{prop:commutativity}) exposes execution order as an energy lever, since two orderings of the same elements can consume different energy when a driving mechanism is active.
    Characterizing the thermal, cache, and power-delivery mechanisms (\Cref{sec:mechanisms}) turns reordering from an informal heuristic into a quantity the calculus can predict and bound.
    Because equal totals do not imply equal exit contexts, substituting one ordering for another inside a larger composition additionally requires the exit contexts to be equivalent over the downstream admissible set (\Cref{prop:commutativity}).

    \item \emph{Cross-hardware transfer.}
    The compositional structure is hardware-independent, since signatures characterized as functions of implementation $\kappa$, operating point $\omega$, and context $\phi$ on a target accelerator compose under the same operators as those on the reference platform.
    Migration reduces to re-characterizing individual elements on the target, over the target's own implementation set; the algebra does not change.

    \item \emph{Anomaly localization.}
    Deviation detection (\Cref{eq:deviation}) is designed to surface the pair whose predicted energy diverges from measurement.
    The sparsity of the interaction graph, which locality (\axname{P4}) and capacity headroom (\axname{P5}) make possible, restricts the search because a workload-level anomaly reduces in principle to a pair-level or prefix-level diagnosis (\Cref{sec:interaction-graph}).

    \item \emph{Principled coarsening and refinement.}
    Refinement--coarsening consistency (\Cref{prop:refinement}) lets a practitioner trade characterization cost against prediction resolution without changing the algebra.
\end{enumerate}

\subsection{What the Framework Does Not Cover}
\label{sec:non-coverage}

\paragraph{Non-time tradeoffs.}
The calculus reasons in the time--energy plane parameterized by $(\theta, \omega)$.
Tradeoffs that replace the time axis with a non-time objective (accuracy, throughput, model quality) are obtained by composing such objective functions with calculus-computed energy.
A curve in $(A, E)$ space, for example, is $\{(A(\theta),\allowbreak E^\star(\theta)) : \theta \in \Theta\}$, with $A(\theta)$ supplied by the application domain and $E^\star(\theta)$ the minimum energy over the element's implementations and operating points, read off the operating frontier (\Cref{sec:operating-frontier}).
The framework supplies the energy axis of any such tradeoff but does not define the other axes. 

\paragraph{Attribution of held-device static power.}
The composition operators attribute every participating device's static power over the full makespan to the composition, with idle elements covering the gaps (\Cref{sec:composition-operators}).
This is an accounting convention rather than a physical necessity.
It is the appropriate accounting for dedicated allocations but debatable for shared clusters, where an idle device may serve other work.
Alternative attribution policies change how consumed energy is divided among compositions; the calculus computes the consumption itself and leaves the division to the deployment context.

\paragraph{Context driven from outside the composition.}
The general calculus assumes that context evolves under the traces of the composed elements, with exit contexts assembling by the mechanisms' update rules from the composition's own execution (\Cref{sec:mechanisms}).
When context is instead driven by execution outside the composition, this assumption fails.
For example, in a virtualized multi-tenant datacenter, co-located tenants share caches, memory bandwidth, and cooling with the composition, yet their elements are invisible to it, perturbing both the entry context an element inherits and the runtime context it executes in.
The mechanisms themselves are in the inventory and the signature's context dependence is characterizable (\axname{M2}), but the drivers of the realized context are neither measurable nor controllable from within the composition (\Cref{def:contexts}), so the calculus cannot say which characterized context applies.

\paragraph{Runtime adaptation.}
Signatures characterized at a fixed operating point and context do not account for runtime changes (closed-loop thermal throttling that interacts with workload dynamics, autotuning that changes kernel variants within a run, or changes to workload parameters of a subset of elements across iterations).
Such elements also lack a fixed trace (\Cref{def:energy-element}), since their actions depend on what they observe at runtime.
Implementation choice fixed ahead of execution, by contrast, is the signature's implementation entry~$\kappa$ (\Cref{sec:energy-elements}); what falls outside the framework is switching driven by runtime observation.
Adaptive regimes require online signature updates; the framework admits this extension but does not specify it.

\paragraph{Uncharacterized elements.}
The framework predicts only for elements with characterized signatures on the target hardware.
We cannot predict a workload whose elements remain unmeasured; extrapolation from nothing is outside scope.

\paragraph{Sub-first-order physical effects.}
The static--dynamic decomposition is a first-order approximation.
Second-order effects (voltage regulator efficiency curves, temperature-dependent leakage beyond the exponential model, current-draw harmonics) are not resolved by the algebra: their run-varying part folds into $\sigma^2$, and their repeatable part is bias.
These effects inflate the elements' variances and hence the bias tolerances the Reduction Theorem carries; when the inflated tolerances exceed what the application requires, we must refine characterization.

\subsection{Open Problems}
\label{sec:open-problems}

\paragraph{Axiom minimality.}
The seven axioms split into physical (\axname{P1}--\axname{P5}) and methodological (\axname{M1}, \axname{M2}) commitments.
Their independence is open: an independence proof would confirm minimality, whereas a derivation of one axiom from the others would simplify the set.
\axname{P2} is a candidate for demotion, as it is satisfiable by construction whenever device boundaries are drawn at power-domain boundaries, which makes it closer to a definitional commitment than a physical one.

\paragraph{Algebraic placement.}
Conditional commutativity and non-distributivity together place energy calculus in a specific algebraic neighborhood (\Cref{sec:related-work}).
Whether that neighborhood corresponds to a known structure or to a new one remains open.

\paragraph{Worst-case energy bounds.}
The current uncertainty treatment carries mean and variance, with conservative operator-level bounds.
Carrying full distributions, using convolution for sums and order statistics for $\max$, would yield worst-case energy bounds for safety-critical or SLA-bound systems and tighten uncertainty propagation through general DAGs.
Although \Cref{app:variance-distributional} sketches the building blocks, the full integration with the algebra remains open.

\paragraph{Power-aware composition.}
Energy calculus already considers power partially, with static power $s$ a primitive and time-averaged dynamic power within an element recoverable as $d/T$.
However, it does not carry the instantaneous power profile within an element or, derivatively, the peak power composed across elements.
Operationally, peak power matters because it triggers thermal events and power caps, and datacenter supply constrains concurrent draw across devices.

A peak-power axis composed under the same three operators is plausible, but each operator carries a different rule.
Under $\seqop$, the elements are disjoint in time on their device, so the device peak is the $\max$ of individual peaks.
Under $\parop$, the two elements' dynamic power draws overlap in time on the same device, so the composite's dynamic peak is bounded by the sum of the individual dynamic peaks on top of the single static draw (\axname{P2}), attained when the peaks align in time, subject to the device's delivery capacity.
Once the capacity is exceeded, the resulting throttling feedback falls outside the framework (runtime adaptation above).
Under $\parml$, each device hosts only its own element and keeps its own peak; what composes is the aggregate draw across devices, whose peak is bounded by the sum of the per-device peaks and attained when they align in time.
The saturation side of this aggregate constraint is already a runtime-context mechanism (\axname{P5}), and what is missing is the composed power profile that determines when it triggers.
Whether these rules compose cleanly into a second algebra alongside the energy one, or whether the device-capacity bound forces a reformulation, is open.

\paragraph{Interaction-graph sparsity at scale.}
The practical reach of the Reduction Theorem leans on the empirical claim that most element pairs are mutually context-insensitive within uncertainty.
Measuring the interaction graph on diverse training or inference workloads would test the claim directly and quantify how much of a real composition composes context-independently.

\paragraph{Predictive context analysis.}
We populate the interaction graph empirically, admitting pairs only after measured deviation.
Predicting context dependencies from workload structure, without first measuring deviation, would cut characterization cost but requires a physical model of each mechanism.

\paragraph{Operating point interpolation.}
Continuous models $s_e(\omega)$ and $d_e(\theta, \kappa, \omega)$ that predict signatures at uncharacterized operating points would cut characterization cost but shift uncertainty from measurement to interpolation.
The same machinery would support cross-hardware transfer when the target device admits only sparse probes.

\paragraph{Energy-aware compilation.}
Embedding energy calculus into ML compilers, so that code generation optimizes composed energy under latency constraints, would close the loop between characterization and deployment.
A type system whose types carry resource budgets (a graded or quantitative type system) would let static analysis track composed energy and uncertainty alongside time complexity.

\paragraph{Reach beyond GPUs.}
TPUs, NPUs, dataflow accelerators, and analog or in-memory compute may have power-state structures unlike those that motivated the static--dynamic split.
Whether energy calculus applies to these domains, and if so how the axioms and mechanisms adapt, remains open.
Non-ML computational domains, including signal processing and scientific computing on FPGAs, raise the same questions.

\paragraph{Hardware design and architecture exploration.}
Architecture exploration evaluates design choices against performance targets for a set of workloads, using kernel-level modeling and simulation to predict whether a proposed design will hit those targets once built.
As energy becomes a primary bottleneck for ML infrastructure, a design is increasingly likely to carry an energy target alongside its performance target.

Energy calculus supplies the composed energy prediction for such a target, since per-kernel signatures characterized on candidate designs compose under the same operators; a design-space sweep then predicts total energy per workload without re-profiling each candidate end to end.
Operating-point domains are one such design dimension, as CPUs already scale voltage and frequency per core or per core group~\cite{intel-sdm-vol3b, amd-ppr-17h}, which makes per-SM or per-SM-group DVFS on GPUs a design potentially worth building.
Energy calculus already provides the generality to explore its gains, since the device-wide versus per-element split of \Cref{sec:energy-elements} is hardware-dependent; finer frequency domains then move $f$ and $V$ from $\omega_{\text{dev}}$ into $\omega_{\text{elem}}$ and compose with the same operators, with transition latency, itself hardware-dependent, bounding the granularity.
Whether the axioms and mechanisms transfer to pre-silicon models, where signatures come from simulation rather than measurement, is open.

%% file: sections/related.tex
\section{Related Work}
\label{sec:related-work}

\subsection{Energy Measurement and Estimation}
\label{sec:rw-measurement}

Hardware instrumentation (NVIDIA NVML, RAPL, vendor APIs) offers direct energy measurement, with known limitations and biases.
Software such as Zeus~\cite{zeus:github}, ML.ENERGY~\cite{mlenergy-benchmark:neuripsdb25}, and other tools based on NVML~\cite{nvml} provide measurements at varying granularities.
These tools share a common gap: they report totals but provide no algebra for reasoning about how parts contribute to the whole.
Energy calculus takes their measurements as inputs (characterized signatures) and adds the compositional machinery they lack.

\subsection{ML Energy Optimization}
\label{sec:rw-optimization}

Zeus~\cite{zeus:nsdi23} tunes batch size and GPU power limit jointly to navigate the time--energy tradeoff during training.
Perseus~\cite{perseus:sosp24} reduces dynamic energy via GPU frequency scaling on off-critical-path computations in pipeline-parallel training.
Kareus~\cite{kareus:osdi26} jointly optimizes SM allocation, kernel launch timing, and GPU frequency for execution-aware energy optimization.
Many efficiency-focused works further reduce static energy by overlapping computation, communication, and memory operations~\cite{nanoflow-osdi25}.
Each of these optimizes a slice of the problem without a unifying algebraic framework.
Energy calculus provides that framework.
Each system's lever maps onto a signature parameter or composition choice (batch size and power limit into~$\theta$ and~$\omega$, frequency and SM allocation into~$\omega$, scheduling and overlap into the composition structure), so the calculus combines their effects in a single algebraic prediction.

\subsection{Performance and Power Modeling}
\label{sec:rw-modeling}

Performance and power modeling supplies the inputs that energy calculus composes.
Roofline models~\cite{williams2009roofline} and operational intensity analysis characterize compute--memory tradeoffs.
Analytical and learned GPU performance models predict execution time from workload parameters~\cite{forecasting-asplos25}.
Power modeling approaches range from counter-based regressions to detailed microarchitectural models~\cite{accelwattch-micro21}.
These predict $T$, $d$, or $s$ in isolation at specific operating points; energy calculus consumes them as signature components and composes them across elements.

\subsection{Compositional Reasoning in Systems}
\label{sec:rw-compositional}

\paragraph{Network calculus.}
Network calculus~\cite{leboudec-thiran} is the closest compositional neighbor.
It uses min-plus algebra to derive deterministic bounds on traffic and delay through networks of service elements, with compact parameterized representations (arrival curves, service curves) that compose.
Energy calculus pursues the analogous goal for a different physical resource, energy rather than delay, and consequently uses a non-distributive algebra whose sequential operator commutes only conditionally (\Cref{prop:commutativity}), rather than min-plus.
Context propagation is the deeper structural difference: energy calculus hands physical state from element to element through exit and entry contexts, which has no counterpart in network calculus.

\paragraph{Tropical/max-plus algebras.}
Max-plus (tropical) algebras~\cite{baccelli1992synchronization} replace $+$ with $\max$ and $\times$ with $+$, which yields formulations of scheduling and critical-path analysis with both operations commutative and addition distributing over maximum.
Energy calculus is distinct on both counts: commutativity of sequential composition is conditional rather than unconditional (\Cref{prop:commutativity}), and sequential composition does not distribute over parallel (\Cref{prop:non-distributivity}).
The static--dynamic decomposition further couples $\max$ (on time in same-device parallel composition) with $+$ (on dynamic energy), so the same parallel operation simultaneously uses both $\max$ and $+$, a structure outside the reach of max-plus, which collapses the two operations into one.

\paragraph{Category theory.}
Monoidal categories~\cite{selinger:graphical} abstract any system with sequential and parallel composition, so energy calculus fits their shape.
The abstraction deliberately leaves open which additional laws hold between the two operators.
For energy, measurement answers that question, as sequential composition does not distribute over parallel (\Cref{prop:non-distributivity}) and commutes only conditionally (\Cref{prop:commutativity}).

%% file: sections/conclusion.tex
\section{Conclusion}
\label{sec:conclusion}

We propose energy calculus, a compositional framework that makes energy a first-class primitive of computational systems.
Energy elements carrying measured signatures compose under three operators and an accompanying algebra, grounded in seven axioms about static and dynamic power.
Two properties distinguish energy calculus from familiar structures, and both follow from physics rather than algebraic choice.
First, sequential composition does not distribute over parallel, because restructuring a schedule duplicates dynamic energy while static energy follows the makespan.
Second, sequential composition commutes only conditionally, because context propagates between elements through thermal, cache, and power-delivery state.
We also provide a Reduction Theorem that recovers a simple context-independent algebra whenever composed elements are mutually context-insensitive.
Practitioners thus pay the cost of context dependence only where the physics demands it.

Energy calculus is borne of years of building energy optimization systems.
Our systems Zeus, Perseus, and Kareus each exploited a piece of the algebra without naming it, whether the smooth response of energy to workload parameters and operating points, static power drawn over a schedule's full makespan, or the static--dynamic decomposition itself.
The calculus codifies these observations, so that what each system rediscovered in its own corner is stated once, for all of them.

Just as complexity theory and information theory grew from the recognition that time and data admit formal structure, we hope energy calculus is a starting point for an equivalent research direction on energy.
This direction spans empirical validation of the axioms, optimization layers built on the algebra, and the design of hardware whose energy behavior is compositional by construction.
We expect that arrow to reverse, so that where systems once taught us the algebra piece by piece, the algebra can now tell us which systems to build.

%% file: sections/appendices.tex
\section{Interaction Mechanism Details}
\label{app:interactions}

This appendix provides detailed characterization protocols for each mechanism in \Cref{tab:interaction-categories}.

\subsection{Thermal Coupling}

\begin{itemize}[nosep]
    \item \textbf{Mechanism:} An element heats the device; elevated temperature increases leakage current (raising static power) and may trigger dynamic thermal throttling (reducing clock frequency, increasing execution time).
    \item \textbf{Timescale:} Thermal time constants of modern GPUs are on the order of seconds, so thermal interactions matter between elements within a few seconds of each other.
    \item \textbf{Characterization:} Measure $E_e$ at multiple device temperatures, set by varying the duration of a prior thermal load, and fit $E_e(\phi)$ as a function of entry temperature.
        Thermally stable profiling, with cooldown intervals between measurements, isolates this effect.
\end{itemize}

\subsection{Cache and Memory State}

\begin{itemize}[nosep]
    \item \textbf{Mechanism:} Preceding elements populate (warm) or pollute (evict) cache lines that subsequent elements use.
        Cache warming decreases memory traffic and dynamic energy (negative interaction effect); pollution increases it (positive).
    \item \textbf{Timescale:} Microseconds; a few intervening memory-intensive operations overwrite cache state.
    \item \textbf{Characterization:} Measure $E_e$ with a cold cache (after a cache-flushing preamble) and with a warm cache (after a representative preceding element runs).
        The difference is the interaction effect.
\end{itemize}

\subsection{Power Delivery}

\begin{itemize}[nosep]
    \item \textbf{Mechanism:} A sudden change in current draw (e.g., transition between elements with very different power profiles) causes voltage droops in the power delivery network, which may trigger brief frequency reductions or unreliable execution that changes timing.
    \item \textbf{Timescale:} Nanoseconds to microseconds for the electrical transient; the effect on execution may persist for tens of microseconds.
    \item \textbf{Characterization:} Requires high-resolution power measurement; typically visible as a brief execution time anomaly at the boundary between elements.
\end{itemize}

\subsection{Resource Contention}

\begin{itemize}[nosep]
    \item \textbf{Mechanism:} Concurrent elements contend for shared resources, and the contention degrades each element's performance and energy.
        On the same device, the contested resources are memory bandwidth, interconnect bandwidth, and shared cache capacity.
        Across devices, the contested resources are shared fabrics (NVLink, PCIe, or network interconnect) and node- or rack-level power budgets.
    \item \textbf{Timescale:} Instantaneous; contention exists only during concurrent execution.
    \item \textbf{Characterization:} Measure $E_e$ in isolation and concurrently with a representative interferer; the difference is the contention effect.
        Same-device contention is characterized through $\parop$; cross-device interconnect contention is characterized through $\parml$ with a co-running peer that exercises the shared fabric.
        Per-element time under co-execution is directly measurable; per-element dynamic energy can only be measured jointly, since device power is shared, so attributing it across the pair requires time-windowed measurement, counter-based apportioning, or regression across varied peers---when attribution is impractical, characterize the pair as a composite.
\end{itemize}

\section{Insensitivity Expansion}
\label{app:expansion}

This appendix derives the left-hand side of the history-insensitivity condition (\Cref{eq:history-insensitive}) from \axname{P1}.
Fix an element~$e$ at workload parameters~$\theta$, implementation~$\kappa$, and operating point~$\omega$, all suppressed below, and let $\phi, \phi'$ be two entry contexts.
Subtracting \axname{P1}'s decomposition at the two contexts and expanding the static-energy product around $\phi'$:
\begin{align}
E_e(\phi) - E_e(\phi')
&= s_e(\phi)\, T_e(\phi) - s_e(\phi')\, T_e(\phi') + d_e(\phi) - d_e(\phi') \nonumber \\
&= \bigl(s_e(\phi') + s_e(\phi) - s_e(\phi')\bigr)\, \bigl(T_e(\phi') + T_e(\phi) - T_e(\phi')\bigr) - s_e(\phi')\, T_e(\phi') + d_e(\phi) - d_e(\phi') \nonumber \\
&= \bigl(s_e(\phi') + \delta s\bigr)\bigl(T_e(\phi') + \delta T\bigr) - s_e(\phi')\, T_e(\phi') + \delta d \nonumber \\
&= \delta s\, T_e(\phi') + \delta T\, s_e(\phi') + \delta s\, \delta T + \delta d,
\end{align}
where $\delta s \coloneqq s_e(\phi) - s_e(\phi')$, $\delta T \coloneqq T_e(\phi) - T_e(\phi')$, and $\delta d \coloneqq d_e(\phi) - d_e(\phi')$ are the signed shifts.
Taking magnitudes, with $\Delta s = |\delta s|$, $\Delta T = |\delta T|$, and $\Delta d = |\delta d|$ as in \Cref{def:history-insensitive}:
\begin{equation}
\bigl| E_e(\phi) - E_e(\phi') \bigr| \;\leq\; \Delta s\, T_e(\phi') + \Delta T\, s_e(\phi') + \Delta s\, \Delta T + \Delta d
\end{equation}
by the triangle inequality, with equality when every shift is non-negative.
The right-hand side is the left-hand side of \Cref{eq:history-insensitive}: the worst-case energy shift the component deviations can produce.
Bounding it by $c_\alpha \cdot \sigma_{E_e}$ therefore bounds the total energy difference, and, every term being non-negative, each component's contribution separately, e.g., $\Delta T\, s_e(\phi') \leq c_\alpha \cdot \sigma_{E_e}$.
The same expansion yields the left-hand side of the interference-insensitivity condition (\Cref{eq:interference-insensitive}): replace $(\phi, \phi')$ with the co-run and isolation runs $(\phi, \rho)$ and $(\phi, \{e\})$, at each entry context $\phi$ in the definition's set, giving the isolation values as the multipliers.

\section{Reduction Theorem}
\label{app:reduction}

This appendix proves the Reduction Theorem (\Cref{prop:reduction}).
When two elements are mutually context-insensitive within measurement uncertainty, compositions agree with their context-independent evaluations within the theorem's bias tolerances.

\subsection{Setup}
\label{app:reduction-setup}

Fix operating point~$\omega$, workload parameters $\theta_1, \theta_2$, and implementations $\kappa_1, \kappa_2$.
Let $\phi_0 = \phi_{\text{null}}$ be the null context at which signatures are characterized (\axname{M2}), and let $\Phi_1, \Phi_2$ be the sets of entry contexts the composition realizes for elements 1 and 2.
\Cref{def:mutual-context-insensitive} includes the null context, so $\phi_0 \in \Phi_i$.
Write $\phi_c$ for the composite's own entry context, which lies in $\Phi_i$ for every element that starts with the composite.
The Reduction Theorem (\Cref{prop:reduction}) assumes mutual context-insensitivity (\Cref{def:mutual-context-insensitive}); its history-insensitivity requirement (\Cref{def:mutual-history-insensitive}) with confidence level~$(1 - \alpha)$ reads:
\begin{equation}
\label{eq:reduction-component-bound}
    \Delta s_i\, T_i(\phi_0) + \Delta T_i\, s_i(\phi_0) + \Delta s_i\, \Delta T_i + \Delta d_i \;\leq\; c_\alpha \cdot \sigma_{E_i} \qquad \forall \phi \in \Phi_i, \; i \in \{1, 2\},
\end{equation}
where $\Delta s_i = |s_i(\phi_0) - s_i(\phi)|$, $\Delta T_i = |T_i(\phi_0) - T_i(\phi)|$, and $\Delta d_i = |d_i(\phi_0) - d_i(\phi)|$, writing $\sigma_{E_i}$ for $\sigma_{E_{e_i}}$.
This bounds the perturbation of $s_i$, $T_i$, and $d_i$ separately, which is what the proofs below require.
It implies the total-energy bound
\begin{equation}
\label{eq:reduction-energy-bound}
    \bigl| E_i(\theta_i, \kappa_i, \omega, \phi) - E_i(\theta_i, \kappa_i, \omega, \phi_0) \bigr| \;\leq\; c_\alpha \cdot \sigma_{E_i} \qquad \forall \phi \in \Phi_i, \; i \in \{1, 2\}
\end{equation}
by the expansion of \Cref{app:expansion}.
When measurement uncertainty is dominated by a single component, the two bounds coincide; otherwise the component-wise bound is the stronger condition that the theorem states.

\subsection{Sequential Case}

\begin{proof}
The context-dependent sequential composition (\Cref{eq:sequential}) from the composite's entry context $\phi_c$ is
\begin{equation}
    E_1 \seqop E_2 = s_1(\phi_c) T_1(\phi_c) + d_1(\phi_c) + s_2(\phi_1) T_2(\phi_1) + d_2(\phi_1)
\end{equation}
with $\phi_1 \in \Phi_2$ the entry context the second element sees, the exit context of $e_1$'s execution from $\phi_c$; the composition runs alone, so runtime contexts are singletons and suppressed.
The context-independent composition evaluates every value at $\phi_0$:
\begin{equation}
    E_1 \seqop^{(0)} E_2 = s_1(\phi_0) T_1(\phi_0) + d_1(\phi_0) + s_2(\phi_0) T_2(\phi_0) + d_2(\phi_0).
\end{equation}
Element 2's block shifts by at most one tolerance,
\begin{equation}
    |s_2(\phi_1) T_2(\phi_1) - s_2(\phi_0) T_2(\phi_0)| + |d_2(\phi_1) - d_2(\phi_0)| \leq c_\alpha \cdot \sigma_{E_2},
\end{equation}
by \Cref{eq:reduction-component-bound} applied to $e_2$ at $\phi_1 \in \Phi_2$, via the expansion of \Cref{app:expansion}.
Element 1's block shifts by at most its own tolerance by the same bound applied at $\phi_c \in \Phi_1$, giving $m_1 = m_2 = 1$; when the composition starts at the null context, $\phi_c = \phi_0$, element 1's block is exact, and $m_1 = 0$.
\end{proof}

\subsection{Parallel Cases}

\begin{proof}
Under~$\parop$ and~$\parml$, both elements start at the composite's entry context $\phi_c$ and share the runtime context, so two substitutions arise: entry-state substitution (evaluating at $\phi_0$ rather than $\phi_c$) and isolation substitution (evaluating each element in isolation rather than at the composition's shared runtime context).
Mutual history-insensitivity (\Cref{eq:reduction-component-bound}) bounds the first and mutual interference-insensitivity (\Cref{def:mutual-interference-insensitive}) the second, each perturbing an element's static power, time, and dynamic energy by at most one tolerance.
Superscript $(0)$ marks null-context isolation values; unmarked values are realized ones.

Under~$\parop$, the composite is $s \cdot \max(T_1, T_2) + d_1 + d_2$, and its shift is at most $\Delta s \cdot \max(T_1^{(0)}, T_2^{(0)}) + s^{(0)} \, |\Delta \max(T_1, T_2)| + \Delta d_1 + \Delta d_2$.
The first term is the longer element's $\Delta s$ summand, the second at most the shifted element's $\Delta T$ summand ($|\Delta \max| \leq \max_i |\Delta T_i|$), so each element contributes at most one tolerance per substitution: $m_1 = m_2 = 2$, dropping to one each when the composition starts at the null context and the entry-state substitution is unneeded.

Under~$\parml$, the durations are equal, the composite is $s_1 T_1 + d_1 + s_2 T_2 + d_2$, and each element's block carries at most one tolerance per substitution exactly as under $\parop$.
\end{proof}

\subsection{Tightness and Breakdown}

The bound is tight when both elements are maximally sensitive within the admitted uncertainty envelope; in practice the deviation is smaller.

\paragraph{Thermal example.}
Static power depends on device temperature~$T_{\text{dev}}$ through leakage current, which rises with temperature.
Across the operating range of typical GPU workloads ($\sim$50--85$^\circ$C for an A100 under sustained load), the dependence is well approximated by the linear form
\begin{equation}
    s(T_{\text{dev}}) \approx s_0 \cdot \bigl(1 + \beta (T_{\text{dev}} - T_{\text{ref}})\bigr),
\end{equation}
where $T_{\text{ref}}$ is the reference temperature at which we measured $s_0$ and $\beta$ is a device-specific leakage coefficient on the order of $0.5$--$1\%$ per~$^\circ$C for current-generation datacenter GPUs.
Elements that run in a narrow thermal band (e.g., within a few~$^\circ$C of $T_{\text{ref}}$) perturb $s$ by less than typical measurement uncertainty across $\Phi_1, \Phi_2$, so the Reduction Theorem applies.
Elements that span a wide thermal range (e.g., a cold-start kernel followed by a sustained compute burst spanning 20--30$^\circ$C, producing a $10$--$30\%$ swing in static power) violate the bound and require $\phi$-parameterized characterization.

\section{Bias Accumulation}
\label{app:reduction-accumulation}

The Reduction Theorem's pairwise bounds (\Cref{eq:reduction}) add over composition, one term per element.
For a shared-device chain $e_1 \seqop \cdots \seqop e_n$ run alone from the null context, with every element history-insensitive over the entry contexts the chain realizes together with the null context, applying the sequential case of \Cref{app:reduction} to each element in turn gives
\begin{equation}
\label{eq:bias-accumulation}
    \bigl| E_{e_1 \seqop \cdots \seqop e_n} - E^{(0)} \bigr| \;\leq\; \sum_{i=2}^{n} c_\alpha\, \sigma_{E_i},
\end{equation}
where $E^{(0)}$ is the context-independent value; element 1 executes at $\phi_0$ and contributes no term.
Parallel operands contribute their operator's multiplied tolerances (\Cref{eq:reduction}).
Multipliers do not compound: composing a composite with a further element adds the new element's multiplied tolerance and leaves the accumulated bias untouched, so the total is a sum over elements, not a product over levels.

The composition $(e_1 \seqop \iota) \parml e_2$, with $\iota$ the idle element restoring equal durations (\Cref{sec:cross-device-parallel}), is the accumulation's canonical instance beyond chains.
Superscript $(0)$ marks null-context isolation values.
Write $\phi_\iota$ for the idle element's entry context, the exit context of $e_1$'s execution; \Cref{eq:sequential,eq:cross-device-parallel} give the two elements' terms plus the idle energy $s_1(\phi_\iota, \{e_2\}) \, (T_2 - T_1)$.
Device 1 is held for the makespan whatever $T_1$ is, so its static energy forms one ledger: with $\delta_a = s_1(\phi_0, \rho) - s_1^{(0)}$ and $\delta_\iota = s_1(\phi_\iota, \{e_2\}) - s_1^{(0)}$ the draw deviations during $e_1$'s execution and during the idle window,
\begin{equation}
    s_1(\phi_0, \rho)\, T_1 + s_1(\phi_\iota, \{e_2\})\, (T_2 - T_1) \;=\; s_1^{(0)}\, T_2 + \delta_a\, T_1 + \delta_\iota\, (T_2 - T_1),
\end{equation}
and $T_1$ survives only against the deviations: a $T_1$ shift moves the boundary between two draws that already agree within tolerance, a second-order effect.
Subtracting the context-independent value $E^{(0)}$ from the composition's energy $E$ and collecting the first-order terms,
\begin{equation}
    \bigl| E - E^{(0)} \bigr| \;\leq\; |\delta_a|\, T_1^{(0)} + |\Delta d_1| + \bigl| s_2 T_2 + d_2 - s_2^{(0)} T_2^{(0)} - d_2^{(0)} \bigr| + s_1^{(0)}\, |\Delta T_2| + |\delta_\iota| \cdot \bigl(T_2^{(0)} - T_1^{(0)}\bigr).
\end{equation}
The first two terms are $e_1$'s $\Delta s$ and $\Delta d$ summands, at most one tolerance per substitution, contributing $2 c_\alpha \sigma_{E_1}$.
Element 2's block is at most $2 c_\alpha \sigma_{E_2}$ by the expansion of \Cref{app:expansion}, and $s_1^{(0)} |\Delta T_2| \leq (s_1^{(0)}/s_2^{(0)}) \cdot 2 c_\alpha \sigma_{E_2}$, since device 1 keeps drawing while $T_2$ stretches.
The idle element's own conditions (\Cref{def:history-insensitive,def:interference-insensitive}) bound the last term: its duration is schedule-set and its dynamic energy zero, so \Cref{eq:history-insensitive,eq:interference-insensitive} reduce to their $\Delta s$ terms and give $|\delta_\iota| \cdot (T_2^{(0)} - T_1^{(0)}) \leq 2 c_\alpha \sigma_{E_\iota}$ across the entry and isolation substitutions, with $\sigma^2_{E_\iota} = s_1^{(0)\,2} \, (\sigma^2_{T_1} + \sigma^2_{T_2})$ inherited from the independent durations that set the gap (\Cref{sec:composition-operators}).
The composition therefore accumulates, to first order, $2\, c_\alpha \sigma_{E_1} + 2\,(1 + s_1/s_2)\, c_\alpha \sigma_{E_2} + 2\, c_\alpha \sigma_{E_\iota}$ at null-context values: the idle element contributes twice its own tolerance, and $e_2$, whose time gates the idle window, carries the idle device's draw as the power ratio.
Products of deviations with time shifts are second order and dropped, as is a realized ordering flip, possible only when the null-context idle window is itself within time noise.

The bias grows linearly in $n$ while propagated uncertainty grows as $\sqrt{n}$: on a shared device with independent errors, $\sigma^2_{E^{(0)}} = \sum_i \sigma^2_{E_i}$ (\Cref{eq:var-sequential}), and by Cauchy--Schwarz $\sum_i \sigma_{E_i} \leq \sqrt{n} \cdot \sqrt{\sum_i \sigma^2_{E_i}}$, with equality at equal variances.
The reported interval therefore does not cover the worst-case accumulated bias at depth; an interval that does must add the accumulated bias bound to the noise quantile,
\begin{equation}
    \bigl| E - E^{(0)} \bigr| \;\leq\; \sum_{i=2}^{n} c_\alpha\, \sigma_{E_i} \;+\; c_\alpha\, \sigma_{E^{(0)}}
\end{equation}
at confidence $(1 - \alpha)$, with $E$ the physical energy of the chain.

Attaining the linear worst case requires per-element deviations of aligned sign.
Deviations that are independent and zero-mean across elements add in quadrature and are absorbed by the propagated interval; aligned signs mean systematic context drift, and sustained drift moves entry contexts out of their characterized sets, so the regime that realizes the worst case is also the regime the applicability condition (\Cref{sec:interaction-graph}) excludes and deviation detection (\Cref{eq:deviation}) flags at composite granularity.

Finally, the idle window's power ratio mirrors the amplification that propagated variance itself carries (the $s_1 + s_2$ weight in \Cref{eq:var-parallel-max}), so measured in multiples of the composite's own propagated deviation, each element's bias contribution stays $O(1)$ regardless of the participating devices' static powers.
Depth, not device configuration, is what loosens the guarantee, with idle windows contributing like elements, at their own tolerances.

\section{Energy Calculus Algebraic Property Proofs}
\label{app:proofs}

We provide full proofs for the algebraic properties stated in \Cref{sec:algebraic-properties}.

\subsection{Commutativity (\Cref{prop:commutativity})}
\label{app:proof-commutativity}

\begin{proof}
We prove the parallel case first, then the sequential case.

\emph{Same-device parallel ($\parop$):}
$E_1 \parop E_2 = s \cdot \max(T_1, T_2) + d_1 + d_2 = s \cdot \max(T_2, T_1) + d_2 + d_1 = E_2 \parop E_1$
by commutativity of $\max$ and addition.
Both elements evaluate at the shared entry context $\phi_0$ and the shared runtime context $\rho = \{e_1, e_2\}$, both symmetric in the two elements, so no label-dependent term arises.
Hence parallel commutativity holds unconditionally.

\emph{Cross-device parallel ($\parml$):}
$E_1 \parml E_2 = s_1 T_1 + d_1 + s_2 T_2 + d_2 = E_2 \parml E_1$
by commutativity of addition, the expression being symmetric in the two elements, with the same shared-$\phi_0$ argument.

\emph{Sequential ($\seqop$):}
Let $\phi_0$ be the entry context of the composite; the composition runs alone, so runtime contexts are singletons and suppressed.
Then
\begin{equation}
    E_1 \seqop E_2 = s_1(\phi_0) T_1(\phi_0) + d_1(\phi_0) + s_2(\phi_1) T_2(\phi_1) + d_2(\phi_1)
\end{equation}
where $\phi_1$ is the exit context of $e_1$'s execution from $\phi_0$, and similarly
\begin{equation}
    E_2 \seqop E_1 = s_2(\phi_0) T_2(\phi_0) + d_2(\phi_0) + s_1(\phi_2) T_1(\phi_2) + d_1(\phi_2)
\end{equation}
with $\phi_2$ the exit context of $e_2$'s execution from $\phi_0$.
In general $\phi_1 \neq \phi_2$, and the two sums differ; the thermal example preceding \Cref{prop:commutativity} provides an explicit instance.
Equality between orderings can also hold outside the context-insensitive case when the context-dependent perturbations of $e_1$ and $e_2$ cancel.
\end{proof}

\begin{proof}[Proof of \Cref{cor:ci-commutativity}]
Assume $e_1$ and $e_2$ are mutually context-insensitive under both orderings (\Cref{def:mutual-context-insensitive}), with the composition starting at the null context ($\phi_0 = \phi_{\text{null}}$): each element's history set contains $\phi_0$, seen when it runs first, and the context the other produces, seen when it runs second.
By \Cref{def:history-insensitive}, substituting $\phi_0$ for $\phi_1$ (respectively $\phi_2$) perturbs each element's total contribution by at most $c_\alpha \cdot \sigma_{E_i}$.
Substituting $\phi_0$ for $\phi_1$ and $\phi_2$ in both expressions reduces both orderings to the same context-independent value,
\begin{equation}
    E_1 \seqop^{(0)} E_2 = s_1(\phi_0) T_1(\phi_0) + s_2(\phi_0) T_2(\phi_0) + d_1(\phi_0) + d_2(\phi_0) = E_2 \seqop^{(0)} E_1,
\end{equation}
so the reduced evaluation is exactly commutative.
The physical orderings are each within one element's tolerance of this common value ($e_2$'s in the order $e_1 \seqop e_2$, $e_1$'s in the other), so $|E_1 \seqop E_2 - E_2 \seqop E_1| \leq c_\alpha (\sigma_{E_1} + \sigma_{E_2})$, establishing the corollary.
\end{proof}

\subsection{Associativity (\Cref{prop:associativity})}
\label{app:proof-associativity}

\begin{proof}
\emph{Sequential ($\seqop$):}
$(E_1 \seqop E_2) \seqop E_3 = \sum_{i=1}^{3}\bigl(s_i(\phi_{i-1}) T_i(\phi_{i-1}) + d_i(\phi_{i-1})\bigr)$
where $\phi_0$ is the composite entry context and $\phi_i$ the exit context of $e_i$'s execution from $\phi_{i-1}$, with the composition running alone so that runtime contexts are singletons and suppressed.
The same sum arises from $E_1 \seqop (E_2 \seqop E_3)$ for two reasons: (i)~context propagation follows execution order, which grouping does not change, so the context entering $e_3$ in the grouping $(12)3$ equals the context entering $e_3$ in $1(23)$; and (ii)~total energy is a sum of per-element contributions that depend only on each element's individual entry context, with no grouping-dependent term.
Both reductions therefore produce the same ordered sequence of context arguments and the same total energy.

\emph{Same-device parallel ($\parop$):}
All three elements co-start and co-run under either grouping, so every value evaluates at the shared entry context $\phi_0$ and the same runtime context $\rho$, and
$(E_1 \parop E_2) \parop E_3 = s \cdot \max(T_1, T_2, T_3) + d_1 + d_2 + d_3 = E_1 \parop (E_2 \parop E_3)$
by associativity of $\max$ and addition.

\emph{Cross-device parallel ($\parml$):}
Likewise with every value at the shared entry context and runtime context, and all durations equal by the operator's requirement,
$(E_1 \parml E_2) \parml E_3 = (s_1 + s_2 + s_3) \cdot T_1 + d_1 + d_2 + d_3 = E_1 \parml (E_2 \parml E_3)$
by associativity of $+$.
\end{proof}

\subsection{Closure (\Cref{prop:closure})}
\label{app:proof-closure}

\begin{proof}
For each operator, we exhibit composite components $(s', T', d')$ that are non-negative, satisfy \axname{P1} ($E' = s' \cdot T' + d'$), and carry the context bookkeeping the composite needs to participate in further compositions.

\emph{Sequential ($\seqop$):}
With every value evaluated as in \Cref{eq:sequential},
\begin{equation}
    T' = T_1(\phi_0, \rho_1) + T_2(\phi_1, \rho_2), \qquad
    d' = d_1(\phi_0, \rho_1) + d_2(\phi_1, \rho_2),
\end{equation}
and \axname{P1} then fixes the static power to
\begin{equation}
    s' = \frac{E' - d'}{T'} = \frac{s_1(\phi_0, \rho_1)\,T_1(\phi_0, \rho_1) + s_2(\phi_1, \rho_2)\,T_2(\phi_1, \rho_2)}{T_1(\phi_0, \rho_1) + T_2(\phi_1, \rho_2)},
\end{equation}
the time-averaged static power, reducing to the device's single $s$ when both elements draw it.
Given $T'$ as the physical makespan and $d'$ as the summed dynamic energy, this is the unique decomposition preserving total static energy $s' \cdot T'$ and total dynamic energy $d'$.
The summary $s'$ is a derived quantity, computed once from the operands' signature values; when $s_1 \neq s_2$, it varies with the operands' durations and hence with their workload parameters, unlike the baseline draw of a device (\axname{P1}).
Every operator consumes an operand's static power only as static energy, so the summary stays exact in further composition: $\seqop$ multiplies $s'$ by the composite's own duration $T'$; $\parml$ requires equal durations, so its makespan window equals $T'$; and $\parop$ requires its operands to run on one device at a shared device-wide operating point, drawing a single $s$.

\emph{Same-device parallel ($\parop$):}
With both elements evaluated at the shared $(\phi_0, \rho)$ of \Cref{eq:same-device-parallel},
\begin{equation}
    s' = s(\phi_0, \rho), \quad
    T' = \max\bigl(T_1(\phi_0, \rho), T_2(\phi_0, \rho)\bigr), \quad
    d' = d_1(\phi_0, \rho) + d_2(\phi_0, \rho),
\end{equation}
where $s$ is the device's static power (\axname{P2}), and $s' \cdot T' + d'$ is \Cref{eq:same-device-parallel} itself.

\emph{Cross-device parallel ($\parml$):}
With both elements evaluated at the shared $(\phi_0, \rho)$ of \Cref{eq:cross-device-parallel} and durations equal by the operator's requirement,
\begin{equation}
\begin{aligned}
    s' &= s_1(\phi_0, \rho) + s_2(\phi_0, \rho), \qquad
    T' = T_1(\phi_0, \rho) = T_2(\phi_0, \rho), \\
    d' &= d_1(\phi_0, \rho) + d_2(\phi_0, \rho),
\end{aligned}
\end{equation}
and $s' \cdot T' + d'$ is \Cref{eq:cross-device-parallel} itself; the composite's device set is the union $\mathcal{D}_{e_1} \cup \mathcal{D}_{e_2}$.

In all cases $s', T', d' \geq 0$, since each combines non-negative values by addition, maximum, and time-averaging.
The composite's entry context is that of its first element (sequential) or of its shared start (parallel), tracked per participating device; its exit context is the state its internal dynamics leave behind; its runtime context tracks the composite's internal timing, with peers entering with the windows they overlap, so each constituent's $\rho_i$ is recovered by interval overlap against the composite's internal boundaries.
The composite's uncertainty is likewise carried directly; its energy variance is the propagated $\sigma^2_{E'}$ of \Cref{prop:var-propagation}, alongside the duration variance the operator's time accounting yields (e.g., $\sigma^2_{T'} = \sigma^2_{T_1} + \sigma^2_{T_2}$ under $\seqop$ with independent errors).
Where that property's independence assumptions fail, the covariance form (\Cref{eq:var-sequential-cov}) or direct measurement of the coupled composite supplies the energy variance (\Cref{sec:uncertainty}).
When the operands draw one shared static power, the composite has a constant baseline and \Cref{eq:energy-variance} applies to it like an element measured directly.
When their static powers differ, it does not: subtracting $s'^2\,\sigma^2_{T'}$ from $\sigma^2_{E'}$ would weight every operand's time noise by the average $s'$, whereas each operand's time noise converts into energy noise through that operand's own static power.
Further composition consumes the carried variances the same way it consumes a measured element's; under $\seqop$ with independent errors, the operands' energy variances add (\Cref{app:variance-sequential}).
The result extends to arbitrary compositions by induction on composition depth, each level composing the previous level's components by the same rules.
\end{proof}

\subsection{Distributivity (\Cref{prop:non-distributivity})}
\label{app:proof-distributivity}

\begin{proof}
\emph{Same device ($\parop$):}
Expand both sides under $E_X \parop E_Y = s \cdot \max(T_X, T_Y) + d_X + d_Y$, with all signatures evaluated at the null context and elements sharing the device's static power $s$.
The left-hand side composes $e_1$ sequentially with the parallel pair $(e_2, e_3)$:
\[
    E_1 \seqop (E_2 \parop E_3) = s(T_1 + \max(T_2, T_3)) + d_1 + d_2 + d_3.
\]
The right-hand side composes $e_1$ sequentially with each of $e_2$ and $e_3$ before the parallel join:
\[
    (E_1 \seqop E_2) \parop (E_1 \seqop E_3) = s \cdot \max(T_1 + T_2, T_1 + T_3) + 2 d_1 + d_2 + d_3.
\]
Since $\max(T_1 + T_2, T_1 + T_3) = T_1 + \max(T_2, T_3)$, the static terms coincide, and the two sides differ by exactly $d_1$, the dynamic energy of the duplicated execution of $e_1$; whenever $d_1 > 0$, the equality fails.

\emph{Cross device ($\parml$):}
Place $e_2$ on device~$\beta$ and $e_3$ on device~$\gamma$, with $T_2 = T_3$ per $\parml$'s equal-duration requirement.
For the left-hand side, one copy of $e_1$ runs on~$\beta$ while $\gamma$ hosts an idle element, so both devices draw static power for the full makespan:
\[
    E_1 \seqop (E_2 \parml E_3) = (s_\beta + s_\gamma)\bigl(T_1 + T_2\bigr) + d_1 + d_2 + d_3.
\]
On the right-hand side both branches contain $e_1$, and $\parml$ places the branches on different devices, so the expression is well-typed only under replication, a device-indexed copy of $e_1$ executing on each device; with the copies sharing $(T_1, d_1)$ on identical devices,
\[
    (E_1 \seqop E_2) \parml (E_1 \seqop E_3) = (s_\beta + s_\gamma)\bigl(T_1 + T_2\bigr) + 2 d_1 + d_2 + d_3,
\]
and the two sides again differ by exactly $d_1$, the dynamic energy of the additional copy.
On heterogeneous devices, the copies' signatures differ, and the gap is the additional copy's own dynamic energy.
\end{proof}

\subsection{Monotonicity (\Cref{prop:monotonicity})}
\label{app:proof-monotonicity}

\begin{proof}
Let $e_1 \seqop e_2$ run alone from entry context $\phi_0$, with $\phi_1$ the exit context of $e_1$'s execution.
Take $e'_1$ matching $e_1$'s static power and execution time but consuming $\delta > 0$ more dynamic energy, whose exit context $\phi'_1$ lowers the successor's energy by more: $E_2(\phi'_1) \leq E_2(\phi_1) - 2\delta$, e.g., through a more cache-friendly output layout.
Replacing $e_1$ with the component-wise no-smaller $e'_1$ then decreases the total: $(E_1 + \delta) + E_2(\phi'_1) \leq E_1 + E_2(\phi_1) - \delta$.
\end{proof}

\begin{proof}[Proof of \Cref{cor:ci-monotonicity}]
In the context-independent form (\Cref{sec:ci-form}), signature values are fixed point values across the substitution, independent of the contexts the composition produces.
Replacing means swapping $e_i$'s signature alone, with the schedule still feasible, shared operating parameters fixed, and every other signature unchanged; a replacement that changes the shared device-wide operating point is treated separately below.
We show that if element~$e'_i$ has $s'_i \geq s_i$, $T'_i \geq T_i$, and $d'_i \geq d_i$, then replacing $e_i$ with $e'_i$ in any composition weakly increases total energy.

\emph{Sequential ($\seqop$):}
$E(e'_1, e_2) = s'_1 T'_1 + s_2 T_2 + d'_1 + d_2 \geq s_1 T_1 + s_2 T_2 + d_1 + d_2$
because $s'_1 T'_1 \geq s_1 T_1$ (product of non-negative numbers, both factors weakly larger) and $d'_1 \geq d_1$.

\emph{Same-device parallel ($\parop$):}
$E(e'_1, e_2) = s \cdot \max(T'_1, T_2) + d'_1 + d_2 \geq s \cdot \max(T_1, T_2) + d_1 + d_2$
because $\max$ is monotone in each argument.
A replacement that raises the shared operating state also raises $s$ for the whole composite, which again weakly increases the total.

\emph{Cross-device parallel ($\parml$):}
$E(e'_1, e_2) = (s'_1 + s_2) \cdot \max(T'_1, T_2) + d'_1 + d_2$, with an idle element at its device's draw restoring equal durations on the shorter side; the static-energy term is weakly larger because it is a product of weakly larger non-negative factors.

The result extends to arbitrary compositions by induction on composition depth: an enclosing composition consumes a sub-composite only through its duration, dynamic energy, and total energy, each of which the replacement weakly increases.
\end{proof}

\section{Variance Propagation Details}
\label{app:variance}

This appendix develops the variance propagation results summarized in \Cref{sec:uncertainty}.
Throughout, operands are elements measured directly or constant-baseline composites, so per \axname{M1}, static powers, including those of idling devices, are exact and each element's time and dynamic-energy errors are uncorrelated.

\subsection{Sequential Composition}
\label{app:variance-sequential}

Under $\seqop$, total energy is the sum of the operands' energies, so the variance of the sum yields
\begin{equation}
\label{eq:var-sequential-cov}
    \sigma^2_{E_1 \seqop E_2} = \sigma^2_{E_1} + \sigma^2_{E_2} + 2\,\mathrm{Cov}\bigl(E_1, E_2\bigr),
\end{equation}
which reduces under independent errors to \Cref{eq:var-sequential}; for operands within \Cref{eq:energy-variance}'s scope, each term expands as $\sigma^2_{E_i} = s_i^2\, \sigma^2_{T_i} + \sigma^2_{d_i}$, with $s_i$ the device's draw at element~$i$'s operating state.

\subsection{Parallel Composition}
\label{app:variance-parallel}

Under $\parop$, $E_{12} = s \cdot \max(T_1, T_2) + d_1 + d_2$.
Writing $\sigma^2_{E_i} = s^2 \mathrm{Var}(T_i) + \mathrm{Var}(d_i)$ for each element and expanding $\mathrm{Var}(E_{12})$ under measurement errors independent between the two elements replaces the per-element $s^2 \mathrm{Var}(T_i)$ terms with $s^2 \mathrm{Var}(\max(T_1, T_2))$:
\begin{equation}
\label{eq:var-parallel}
    \sigma^2_{E_1 \parop E_2} = \sigma^2_{E_1} + \sigma^2_{E_2} + s^2 \cdot \mathrm{Var}\bigl(\max(T_1, T_2)\bigr) - s^2 \cdot \bigl(\mathrm{Var}(T_1) + \mathrm{Var}(T_2)\bigr) + 2s \cdot C_{\max},
\end{equation}
where $C_{\max}$ is the covariance between $\max(T_1, T_2)$ and $d_1 + d_2$; when the times $(T_1, T_2)$ are jointly independent of the dynamic energies $(d_1, d_2)$, $C_{\max} = 0$, and \Cref{eq:var-parallel} reduces to \Cref{prop:var-propagation}'s form.

Under $\parml$, the same expansion gives $\sigma^2_{E_1 \parml E_2} = \sigma^2_{E_1} + \sigma^2_{E_2} + (s_1 + s_2)^2\, \mathrm{Var}(\max(T_1, T_2)) - s_1^2\, \mathrm{Var}(T_1) - s_2^2\, \mathrm{Var}(T_2) + 2\,(s_1 + s_2)\, C_{\max}$: the makespan term carries the combined weight $(s_1 + s_2)^2$, while each subtracted per-element term keeps its own $s_i^2$.
The cross-device form takes each held device to keep drawing its active static power through the makespan.
The exact expression involves the distribution of $\max(T_1, T_2)$, which depends on the marginal distributions of $T_1$ and $T_2$.
For known distributions (e.g., Gaussian execution times), we can compute $\mathrm{Var}(\max(T_1, T_2))$ analytically or via simulation.
When the measurement noise on $E_1$ and $E_2$ is correlated (e.g., shared instrumentation, shared thermal history), an additional $2\,\mathrm{Cov}(E_1, E_2)$ term enters the right-hand side, which matches the sequential covariance form (\Cref{eq:var-sequential-cov}).

\subsection{Closed Forms and Limits for the Makespan Variance}
\label{app:variance-max}

Two cases of $\mathrm{Var}(\max(T_1, T_2))$ admit a quick treatment.
First, when $T_1$ and $T_2$ are independent Gaussian with means $\mu_i$ and variances $\sigma^2_{T_i}$, $\mathrm{Var}(\max(T_1, T_2))$ has a known closed form in terms of the standard normal CDF and PDF.
Second, when the runtimes are well separated, $|\mu_1 - \mu_2| \gg \sigma_{T_1} + \sigma_{T_2}$, the longer element dominates the maximum and $\mathrm{Var}(\max(T_1, T_2)) \approx \sigma^2_{T_{i^\star}}$ where $i^\star = \arg\max_i \mu_i$.
Under $\parop$, the parallel variance in this regime simplifies to $\sigma^2_{E_{i^\star}} + \sigma^2_{d_j}$, where $j$ is the index of the shorter element: the longer element's energy variance is propagated in full, and only the shorter element's dynamic-energy uncertainty contributes.
Under $\parml$, the makespan fluctuation holds both devices, so the regime instead gives $(s_1 + s_2)^2\, \sigma^2_{T_{i^\star}} + \sigma^2_{d_{i^\star}} + \sigma^2_{d_j}$.
This is the common regime in pipeline-parallel training and in compute-overlapping-communication patterns, where one path is reliably longer than the other.

\subsection{Distributional Treatment}
\label{app:variance-distributional}

If full distributions (not just mean and variance) are available, we compute the composition via convolution (for sums under sequential composition) or order statistics (for $\max$ under parallel composition), both exact for independent components; dependent components require their joint distribution.
This yields the exact composed distribution, from which we can extract any quantile.
The mean-and-variance treatment in the main text is the default; the distributional treatment is a refinement for cases where tail behavior matters (e.g., worst-case energy bounds for safety-critical systems).

\section{Refinement--Coarsening Consistency}
\label{app:refinement}

\begin{property}[Refinement--Coarsening Consistency, full statement]
Let element~$e$ have directly measured reduced signature $S_e = (e, \theta, \kappa, \omega, \phi_{\text{null}}, s_e, T_e, d_e, \sigma^2_{T_e}, \sigma^2_{E_e}, M)$ (\Cref{prop:reduction}).
Let $e$ be refined into sub-elements $e_1, \ldots, e_n$ composed via operators $\mathrm{op}_1, \ldots, \mathrm{op}_{n-1}$, yielding a composed reduced signature $\hat{S}_e$ with components $(\hat{s}, \hat{T}, \hat{d},\allowbreak \hat{\sigma}^2_T, \hat{\sigma}^2_E, M')$.
Then, assuming mutual context-insensitivity within uncertainty:
\begin{equation}
    |E_e - \hat{E}_e| \leq \sum_i m_i \, c_\alpha \, \sigma_{E_{e_i}} + c_\alpha \cdot \sqrt{\sigma^2_{E_e} + \hat{\sigma}^2_E}
\end{equation}
at confidence level~$(1 - \alpha)$, with the bias term dropping to zero when each $e_i$'s signature is instead correctly characterized at every entry and runtime context the refinement realizes.
\end{property}

\begin{proof}[Proof sketch]
$E_e$ and $\hat{E}_e$ are two estimates of the same physical quantity.
The direct estimate is unbiased up to measurement error; the composed estimate carries the reduction bias, bounded by $\sum_i m_i c_\alpha \sigma_{E_{e_i}}$ (\Cref{app:reduction-accumulation}) under mutual context-insensitivity, and zero under correct characterization at every realized context.
After subtracting that bias, the difference is a zero-mean random variable with variance bounded by $\sigma^2_{E_e} + \hat{\sigma}^2_E$ when the two estimates share no measurements; shared measurements add a covariance term.
The Gaussian approximation with $c_\alpha$ the two-sided Gaussian multiplier, consistent with \Cref{def:history-insensitive}, then yields the stated bound; it also holds distribution-free via Chebyshev's inequality at the cost of a larger $c_\alpha = 1/\sqrt{\alpha}$.
A violation of this bound indicates an incorrect measurement, an element refinement that omits a material interaction (mutual context-sensitivity not absorbed by characterization), or both.
\end{proof}

Refinement and coarsening are inverses in the following sense.
Refinement decomposes $e$ into sub-elements and produces $\hat{S}_e$; coarsening takes a collection of composed sub-elements and produces a composite signature.
Iterated refinement followed by coarsening at the same granularity yields a signature that agrees with the original within propagated uncertainty, by the same argument.

\section{Frontier Composition Under Context-Independence}
\label{app:frontier}

This appendix proves the frontier composition results of \Cref{sec:operating-frontier}.
Fix workload parameters $\theta_1, \theta_2$ and entry context~$\phi_0$, and let each element~$e_i$ carry its implementations~$\kappa$, operating set $\Omega_{e_i}$, and time--energy map per \Cref{sec:frontier-defined}.
Both subsections assume mutual context-insensitivity at every choice of implementations and operating points (\Cref{def:mutual-context-insensitive}), and the Reduction Theorem (\Cref{prop:reduction}) evaluates signatures at~$\phi_0$ throughout.
All frontier identities are therefore exact in the model; physically, every swept composite point carries its propagated uncertainty interval, and a point is discarded as dominated only when intervals separate, per the comparison rule of \Cref{sec:uncertainty}.
Same-device parallel sweeps are constrained to pairs sharing the device-wide components $\omega_{\text{dev}}$ (\Cref{sec:composition-operators}); per-element components and implementations vary freely.
We assume Pareto-minimal points are attained, e.g., the swept sets finite or compact with continuous time--energy maps.

\subsection{Sequential Frontier Composition}
\label{app:frontier-sequential}

This subsection proves \Cref{prop:frontier-sequential}.

\begin{proof}[Proof of \Cref{prop:frontier-sequential}]
Under $\seqop$, durations add and $E_{e_1 \seqop e_2} = E_1(\kappa_1, \omega_1) + E_2(\kappa_2, \omega_2)$, with each element's values independent of the other's choice by mutual context-insensitivity, so each choice of implementations and operating points lands at the coordinate-wise sum of the two elements' swept points.
The standard domination argument applies: if an element's point lies off $\mathcal{F}_{e_i}$, replacing it with a dominating point improves the composite in at least one coordinate and worsens neither, contradicting Pareto-minimality; hence Pareto-minimal composite points arise from points of $\mathcal{F}_{e_1}$ and $\mathcal{F}_{e_2}$, and the composite's achievable set is $\mathcal{F}_{e_1} \boxplus \mathcal{F}_{e_2}$ up to discarding Pareto-dominated points.
Discarding them yields $\mathcal{F}_{e_1 \seqop e_2} = \mathrm{Pareto\text{-}min}\bigl(\mathcal{F}_{e_1} \boxplus \mathcal{F}_{e_2}\bigr)$.
\end{proof}

Associativity of frontier composition under~$\seqop$ follows from associativity of~$\boxplus$, with Pareto-dominated points discarded early or late to the same effect; the result extends to any finite number of mutually context-insensitive elements by induction.

\subsection{Parallel Frontier Coupling}
\label{app:frontier-parallel}

Under both parallel operators the two frontiers couple through the shared makespan $T_{\max} = \max(T_1, T_2)$.
Same-device composition draws the device's static power once (\axname{P2}), so, at each pair of implementations,
\begin{equation}
\label{eq:parallel-energy-omega}
    E_{e_1 \parop e_2}(\omega_1, \omega_2) = s(\omega_{\text{dev}}) \cdot T_{\max} + d_1(\omega_1) + d_2(\omega_2), \qquad \omega_1, \omega_2~\text{sharing}~\omega_{\text{dev}};
\end{equation}
across devices, $\parml$ requires equal durations, so an idle element pads the earlier-finishing device (\Cref{sec:cross-device-parallel}); the same makespan rule follows with each device drawing its own static power and no shared-$\omega_{\text{dev}}$ constraint, $E = \bigl(s_1(\omega_1) + s_2(\omega_2)\bigr)\, T_{\max} + d_1(\omega_1) + d_2(\omega_2)$, taking each device's idle draw $\bar{s}_i$, its static power in the ready state it idles in, equal to the active draw ($\bar{s}_i = s_i$); the slack proof below keeps the two draws distinct.
In both cases the composite frontier is obtained by sweeping the elements' implementations and operating points and keeping the Pareto-optimal points; the shared makespan couples the two sweeps, so unlike the sequential case the composition is not a Minkowski sum.

\begin{proof}[Proof of \Cref{prop:slack-conversion}]
On a shared device, the composite is $s \cdot T_{\max} + d_1 + d_2$ with $s$ the device's static power (\axname{P2}); the move leaves the device-wide operating point, hence $s$ and $e_2$'s signature, unchanged, and the makespan unchanged since $T_1' \leq T_{\max}$, so $\Delta E = d_1' - d_1 \leq 0$ by condition~(i).
On separate devices, per the accounting of \Cref{sec:composition-operators}, each device draws its element's static power over the element's duration and $\bar{s}_i$ through its idle element over the remainder of the makespan; the idle element's operating point is schedule-chosen, so the move leaves $\bar{s}_i$ unchanged.
This regroups as $E = \tilde{E}_1 + \tilde{E}_2 + (\bar{s}_1 + \bar{s}_2) \cdot T_{\max}$ with $\tilde{E}_i = d_i + (s_i - \bar{s}_i)\, T_i$ element~$i$'s energy above what its device would draw idling over the same period, the decomposition of Perseus~\cite{perseus:sosp24}.
The move changes only $\tilde{E}_1$ at unchanged $T_{\max}$, and with $\delta = T_1' - T_1$,
\begin{equation*}
    \Delta E \;=\; \underbrace{d_1' - d_1}_{\text{(i) dynamic saving}} \;+\; \underbrace{(s_1' - s_1)\, T_1}_{\text{(ii) static saving, active period}} \;+\; \underbrace{(s_1' - \bar{s}_1)\, \delta}_{\text{(iii) window exchange}},
\end{equation*}
each term non-positive under its condition; the window exchange fills time formerly idling at $\bar{s}_1$ with computation drawing $s_1'$.
The sign of (iii) is the race-to-idle decision: on hardware whose idle state draws less than the slowed active state, racing to finish and idling is the cheaper schedule and slack filling costs energy.
\end{proof}

%% file: energy-calculus.bbl
\begin{thebibliography}{10}

\bibitem{nvml}
{{NVIDIA Management Library (NVML)}}.
\newblock \url{https://developer.nvidia.com/nvidia-management-library-nvml}.

\bibitem{zeus:github}
{Zeus}.
\newblock \url{https://github.com/ml-energy/zeus}.

\bibitem{amd-ppr-17h}
{Advanced Micro Devices, Inc.}
\newblock Processor programming reference ({PPR}) for {AMD} family 17h model
  20h, revision {A1} processors.
\newblock \url{https://docs.amd.com/v/u/en-US/55772-A1-PUB_3.08}, 2021.

\bibitem{baccelli1992synchronization}
Fran\c{c}ois Baccelli, Guy Cohen, Geert~Jan Olsder, and Jean-Pierre Quadrat.
\newblock {\em Synchronization and Linearity: An Algebra for Discrete Event
  Systems}.
\newblock Wiley New York, 1992.

\bibitem{cbre2025}
{CBRE}.
\newblock Global data center trends 2025.
\newblock
  \url{https://www.cbre.com/insights/reports/global-data-center-trends-2025},
  2025.

\bibitem{perseus:sosp24}
Jae-Won Chung, Yile Gu, Insu Jang, Luoxi Meng, Nikhil Bansal, and Mosharaf
  Chowdhury.
\newblock Reducing energy bloat in large model training.
\newblock In {\em SOSP}, 2024.

\bibitem{mlenergy-benchmark:neuripsdb25}
Jae-Won Chung, Jeff~J. Ma, Ruofan Wu, Jiachen Liu, Oh~Jun Kweon, Yuxuan Xia,
  Zhiyu Wu, and Mosharaf Chowdhury.
\newblock The {ML.ENERGY} benchmark: Toward automated inference energy
  measurement and optimization.
\newblock In {\em NeurIPS D\&B}, 2025.

\bibitem{dtct-complexity}
Prabuddha De, E.~James Dunne, Jay~B. Ghosh, and Charles~E. Wells.
\newblock Complexity of the discrete time-cost tradeoff problem for project
  networks.
\newblock {\em Operations Research}, 45(2):302--306, 1997.

\bibitem{intel-sdm-vol3b}
{Intel Corporation}.
\newblock Intel 64 and {IA-32} architectures software developer's manual,
  volume {3B}: System programming guide, part 2.
\newblock
  \url{https://www.intel.com/content/www/us/en/developer/articles/technical/intel-sdm.html},
  2026.

\bibitem{accelwattch-micro21}
Vijay Kandiah, Scott Peverelle, Mahmoud Khairy, Junrui Pan, Amogh Manjunath,
  Timothy~G Rogers, Tor~M Aamodt, and Nikos Hardavellas.
\newblock {AccelWattch}: A power modeling framework for modern gpus.
\newblock In {\em MICRO}, 2021.

\bibitem{bloombergnef25}
Helen Kou.
\newblock Power for {AI}: Easier said than built.
\newblock
  \url{https://about.bnef.com/insights/commodities/power-for-ai-easier-said-than-built/},
  2025.

\bibitem{leboudec-thiran}
Jean-Yves {Le Boudec} and Patrick Thiran.
\newblock {\em Network Calculus: A Theory of Deterministic Queuing Systems for
  the Internet}.
\newblock Springer Berlin Heidelberg, 2002.

\bibitem{forecasting-asplos25}
Seonho Lee, Amar Phanishayee, and Divya Mahajan.
\newblock Forecasting {GPU} performance for deep learning training and
  inference.
\newblock In {\em ASPLOS}, 2025.

\bibitem{nvidia-a100-datasheet}
{NVIDIA}.
\newblock {NVIDIA A100 Tensor Core GPU} datasheet, 2021.

\bibitem{selinger:graphical}
Peter Selinger.
\newblock A survey of graphical languages for monoidal categories.
\newblock In Bob Coecke, editor, {\em New Structures for Physics}, volume 813
  of {\em Lecture Notes in Physics}, pages 289--355. Springer, 2010.

\bibitem{inferencemax-blog25}
SemiAnalysis.
\newblock {InferenceMAX}: Open source inference benchmarking.
\newblock
  \url{https://newsletter.semianalysis.com/p/inferencemax-open-source-inference},
  2025.

\bibitem{dtct-apx-hardness}
Ola Svensson.
\newblock Hardness of vertex deletion and project scheduling.
\newblock In {\em Approximation, Randomization, and Combinatorial Optimization.
  Algorithms and Techniques}, pages 301--312, Berlin, Heidelberg, 2012.
  Springer Berlin Heidelberg.

\bibitem{eia-utility-data}
{U.S. Energy Information Administration (EIA)}.
\newblock Capital cost and performance characteristics for utility-scale
  electric power generating technologies.
\newblock
  \url{https://www.eia.gov/analysis/studies/powerplants/capitalcost/pdf/capital_cost_AEO2025.pdf},
  2024.

\bibitem{williams2009roofline}
Samuel Williams, Andrew Waterman, and David Patterson.
\newblock Roofline: an insightful visual performance model for multicore
  architectures.
\newblock {\em Communications of the ACM}, 52(4):65--76, 2009.

\bibitem{kareus:osdi26}
Ruofan Wu, Jae-Won Chung, and Mosharaf Chowdhury.
\newblock Kareus: Joint reduction of dynamic and static energy in large model
  training.
\newblock In {\em OSDI}, 2026.

\bibitem{zeus:nsdi23}
Jie You, Jae-Won Chung, and Mosharaf Chowdhury.
\newblock Zeus: Understanding and optimizing {GPU} energy consumption of {DNN}
  training.
\newblock In {\em USENIX NSDI}, 2023.

\bibitem{nanoflow-osdi25}
Kan Zhu, Yufei Gao, Yilong Zhao, Liangyu Zhao, Gefei Zuo, Yile Gu, Dedong Xie,
  Tian Tang, Qinyu Xu, Zihao Ye, Keisuke Kamahori, Chien-Yu Lin, Ziren Wang,
  Stephanie Wang, Arvind Krishnamurthy, and Baris Kasikci.
\newblock {NanoFlow}: Towards optimal large language model serving throughput.
\newblock In {\em OSDI}, 2025.

\end{thebibliography}
